\begin{document}

\title{P4GCN: Vertical Federated Social Recommendation with Privacy-Preserving Two-Party Graph Convolution Network}

\author{Zheng Wang}
\email{zwang@stu.xmu.edu.cn}
\additionalaffiliation{%
  \institution{Shanghai 
Innovation Institution}
  \city{Shanghai}
  \country{China}
}
\affiliation{
  \institution{Fujian Key Laboratory of Sensing and Computing for Smart Cities, School of Informatics, Xiamen University}
  \city{Xiamen}
  \country{China}
}

\author{Wanwan Wang}
\email{wangwanwan@insightone.cn}
\affiliation{
  \institution{Insightone Tech Co., Ltd}
  \city{Beijing}
  \country{China}
}

\author{Yimin Huang}
\email{huangyimin@insightone.cn}
\affiliation{
  \institution{Insightone Tech Co., Ltd}
  \city{Beijing}
  \country{China}
}

\author{Zhaopeng Peng}
\email{pengzhaopeng@stu.xmu.edu.cn}
\affiliation{
  \institution{Fujian Key Laboratory of Sensing and Computing for Smart Cities, School of Informatics, Xiamen University}
  \city{Xiamen}
  \country{China}
}

\author{Ziqi Yang}
\email{yangziqi@stu.xmu.edu.cn}
\affiliation{
  \institution{Fujian Key Laboratory of Sensing and Computing for Smart Cities, School of Informatics, Xiamen University}
  \city{Xiamen}
  \country{China}
}

\author{Ming Yao}
\email{yaoming@insightone.cn}
\affiliation{
  \institution{Insightone Tech Co., Ltd}
  \city{Beijing}
  \country{China}
}

\author{Cheng Wang}
\email{cwang@xmu.edu.cn}
\affiliation{
  \institution{Fujian Key Laboratory of Sensing and Computing for Smart Cities, School of Informatics, Xiamen University}
  \city{Xiamen}
  \country{China}
}

\author{Xiaoliang Fan}
\email{fanxiaoliang@xmu.edu.cn }
\authornote{Corresponding Author}
\affiliation{
  \institution{Fujian Key Laboratory of Sensing and Computing for Smart Cities, School of Informatics, Xiamen University}
  \city{Xiamen}
  \country{China}
}

\renewcommand{\shortauthors}{Zheng Wang et al.}

\begin{abstract}
In recent years, graph neural networks (GNNs) have been commonly utilized for social recommendation systems. However, real-world scenarios often present challenges related to user privacy and business constraints, inhibiting direct access to valuable social information from other platforms. While many existing methods have tackled matrix factorization-based social recommendations without direct social data access, developing GNN-based federated social recommendation models under similar conditions remains largely unexplored.
To address this issue, we propose a novel vertical federated social recommendation method leveraging privacy-preserving two-party graph convolution networks (P4GCN) to enhance recommendation accuracy without requiring direct access to sensitive social information. First, we introduce a Sandwich-Encryption module to ensure comprehensive data privacy during the collaborative computing process. Second, we provide a thorough theoretical analysis of the privacy guarantees, considering the participation of both curious and honest parties. Extensive experiments on four real-world datasets demonstrate that P4GCN outperforms state-of-the-art methods in terms of recommendation accuracy.
\end{abstract}

\begin{CCSXML}
<ccs2012>
<concept>
<concept_id>10002978.10003022.10003026</concept_id>
<concept_desc>Security and privacy~Web application security</concept_desc>
<concept_significance>500</concept_significance>
</concept>
<concept>
<concept_id>10002951.10003260.10003261.10003270</concept_id>
<concept_desc>Information systems~Social recommendation</concept_desc>
<concept_significance>500</concept_significance>
</concept>
</ccs2012>
\end{CCSXML}

\ccsdesc[500]{Security and privacy~Web application security}
\ccsdesc[500]{Information systems~Social recommendation}

\keywords{Social Recommendation; Federated Learning; Graph Neuron Network}


\maketitle

\section{Introduction}

Graph neural networks (GNNs) \cite{gcn,20} are a class of deep learning models specifically designed to handle graph-structured data, including various scenarios such as social networks \cite{ying2018graph,fan2019graph}, finance and insurance technology \cite{weber2019anti, liu2019geniepath}, etc. By harnessing the capabilities of GNNs, social recommendation systems can gain an in-depth understanding of the intricate dynamics and social influence factors that shape users' preferences, leading to improved recommendation accuracy. For example, an insurance company could utilize social relationships extracted from a social network platform by a GNN model to enhance the accuracy of personalized product recommendations (i.e., insurance marketing). However, in real-world scenarios, privacy and business concerns often hinder direct access to private information possessed by aforementioned social platforms. Consequently, the integration of privacy-preserving technologies, such as federated learning \cite{mcmahan2017communication}, secure multi-party computation \cite{yao1982protocols}, homomorphic encryption \cite{rivest1978data}, and differential privacy \cite{dwork2006differential}, into social recommendation tasks has attracted significant attention from both academia and industries.

\begin{figure}[t]
  \centering
  \includegraphics[width=\linewidth]{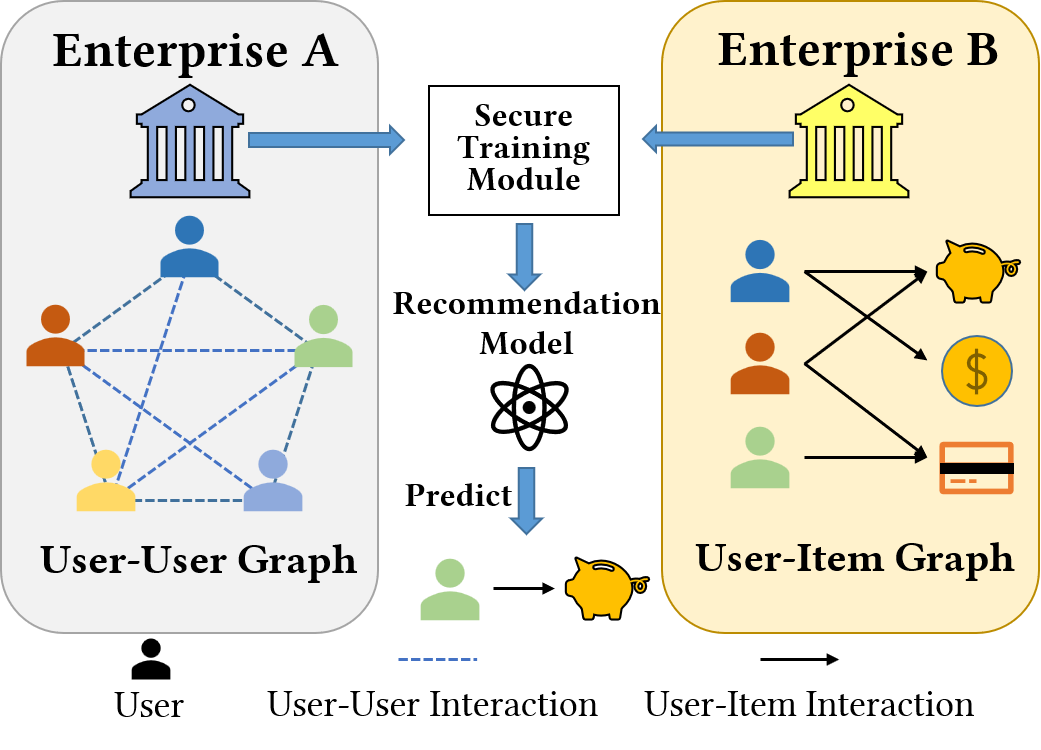}
  \caption{The example of vertical federated social recommendation with inaccessible social data.}
  \Description{vertical federated social recommendation}
\end{figure}
Recent works mainly enable the recommender to collaboratively train matrix factorization \cite{mnih2007probabilistic} based recommendation models without accessing the social data owned by other platforms\cite{chen2020secure, cui2021exploiting}. \cite{chen2020secure} proposed the secure social MF to utilize the social data as the regularization term when optimizing the model. Further, \cite{cui2021exploiting} significantly reduces both the computation and communication costs of the secure social matrix factorization by designing a new secure multi-party computation protocol. 
However, these solutions cannot be applied to training GNN models, because the computation processes involved in training GNN models are typically more complex compared to MF-based methods. For example, in GNN models, the aggregation of features from different users on the social graph involves multiplying the aggregated results with additional parameter matrices. In contrast, MF-based methods focus on reducing the distances between neighbors' embeddings based on the social data, without the need for additional parameters. In addition, the formulations used in the forward and backward processes of GNN models are much more complex than those of MF-based methods. Consequently, it is essential to develop a secure social recommendation protocol tailored explicitly to enhance the optimization of GNN models.

To address the aforementioned challenges, we propose a novel  \textit{vertical federated \underline{S}ocial recommendation
with \underline{P}rivacy-\underline{P}reserving \underline{P}arty-to-\underline{P}arty \underline{G}raph \underline{C}onvolution \underline{N}etworks} (\textbf{P4GCN}) to improve the social recommendation system without direct access to the social data. In our approach, we first introduce the \textit{Sandwich-Encryption} module, which ensures data privacy throughout the collaborative computing process. We then provide a theoretical analysis of the security guarantees under the assumption that all participating parties are curious and honest. Finally, extensive experiments are conducted on three real-world datasets, and results demonstrate that our proposed P4GCN outperforms state-of-the-art methods in terms of both recommendation accuracy and communication efficiency. 

The main contributions of this study can be summarized as follows:
\begin{itemize}
\item We propose P4GCN, a novel method for implementing vertical federated social recommendation with theoretical guarantees. Unlike previous works that assume the availability of social data, we focus on leveraging GNN to enhance recommendation systems with fully unavailable social data in a privacy-preserving manner.

\item We introduce the sandwich encryption module, which guarantees data privacy during model training by employing a combination of homomorphic encryption and differential privacy. We provide theoretical guarantees to support its effectiveness.
\item Experimental results conducted on four real-world datasets illustrate the enhancements in performance and efficiency. Furthermore, we evaluate the impact of the privacy budget on the utility of the model.

\end{itemize}


\section{Related works}
\subsection{Social recommendation}
Existing social recommendation methods have adopted various architectures according to their goals and achieved outstanding results \cite{sharma2022survey}. For instance, many SocialRS methods employ the graph attention neural network (GANN) \cite{20} to differentiate each user’s preference for items or each user’s influence on their social friends. Some other methods \cite{9, 14, 15, 18, 21} use the graph recurrent neural networks (GRNN) \cite{16, 23} to model the sequential behaviors of users. However, these centralized methods cannot be directly applied when the social data is inaccessible.

\subsection{Federated recommendation}
There are mainly two types of works addressing recommendation systems in FL. The first type is User-level horizontal FL. FedMF \cite{chai2020secure} safely train a matrix factorization model for horizontal users. FedGNN \cite{wu2021fedgnn} captures high-order user-item interactions. FedSoG \cite{liu2022federated} leverages social information to further improve model performance. The second type is Enterprise-level vertical FL which considers training a model with separated records kept by different companies. To promise data security in this case, techniques such as differential privacy\cite{chen2020vertically} and homomorphic encryption\cite{ni2021vertical}, are widely used. \cite{mai2023vertical} uses random projection and ternary quantization mechanisms to achieve outstanding results in privacy-preserving. However, these works failed to construct the social recommendation model when the social data is unavailable. To address this issue, SeSorec\cite{chen2020secure} protects social information while utilizing the social data to regularize the model. \cite{cui2021exploiting} proposed two secure computation protocols to further improve the training efficiency. Although these works can be applied to matrix factorization models, the GNN-based models have not been considered in this case. 

\section{Problem formulation}
In this section, we first introduce the notations we used, and then we give the formal definition of our problem.
Let $U=\{u_i\}, u_i\in \mathbb{N}$ denote the user set and $V=\{v_i\}, v_i\in \mathbb{N}$ denote the item set, where the number of users is $N_U=|U|$ users and the number of items is $N_V=|V|$. There are two companies $\mathcal{P}_1, \mathcal{P}_2$ that own different parts of the user and item data. $\mathcal{P}_1$ owns the user set $U$ and the item set $V$ with the interactions between users and items $\mathcal{R}=\{(u_i, v_j, r_k)\}$, where each $r_k\in \mathbb{R}$ is a scalar that describes the $k$th interaction in $\mathcal{R}$. $\mathcal{P}_2$ owns the same user set $U$ and their social data (i.e. user-user interactions) $\mathcal{S}=\{(u_i, u_j, s_k)\}$, where $s_k\in \mathbb{R}$ denotes the $k$th interaction in $\mathcal{S}$. 

$\mathcal{P}_1$ and $\mathcal{P}_2$ collaboratively train a social recommendation GNN-based model $f_{\boldsymbol{\theta}}$ that predicts the rating $\hat{r}_{u_iv_j}=f(U, V, \mathcal{R}_{train}, \mathcal{S})$ of the user $u_i$ assigning to the item $v_j$. We minimize the mean square errors (i.e. MSE) \cite{chen2020secure} between the predictions and the targets to optimize the model parameters $\boldsymbol{\theta}$:
\[\min_{\boldsymbol{\theta}}\mathcal{L}(\boldsymbol{\theta};U,V,\mathcal{R}_{train},\mathcal{S})={\frac{1}{|\mathcal{R}_{train}|}\sum_{(u_i, v_j, r_k)\in \mathcal{R}_{train}}\|r_k-\hat{r}_{u_iv_j}\|^2}\]
Since all the computation can be done by $\mathcal{P}_1$ itself except for the GNN layers for the social aggregation, we focus on protecting data privacy when computing the results of the social aggregation layer. Particularly, we consider the most classical GNN operator, Graph Convolution (GC), as the social aggregation operator in our model. Given a social-aggregation GC operator $GC(\bold X,\bold A,\boldsymbol{\theta}_{GC})$, $\mathcal{P}_1$ should realize message passing mechanism of user features $\bold X\in \mathbb{R}^{N \times d}$ over the users' social graph $\bold A \in\{a_{ij}\}_{N \times N}, a_{ij}\in\{0,1\}$ (i.e. the adjacent matrix) as below:

\begin{description}\label{computation}
    \item[Forward.]
        \begin{align}
            \bold{\tilde{L}}_{sym} = \bold D^{-\frac 12}&(\bold A+\bold I)\bold D^{-\frac 12}, \bold D=\text{diag}([1+\sum_{j}a_{1j},...,1+\sum_{j}a_{Nj}])\label{eq1}\\
    &\bold Z=\bold \sigma(\bold Y+ \bold 1\bold b^\top), \bold Y = \bold{\tilde{L}}_{sym} \bold X\bold W\label{eq2}
    \end{align}
    \item[Backward.]
        \begin{equation}
             \frac{\partial \mathcal{L}}{\partial \bold X}=\frac{\partial \mathcal{L}}{\partial \bold Y}\frac{\partial \bold Y}{\partial \bold X} = \bold{\tilde{L}}_{sym} \frac{\partial \mathcal{L}}{\partial \bold Y}\bold W^\top, 
      \frac{\partial \mathcal{L}}{\partial \bold W}= \frac{\partial \mathcal{L}}{\partial \bold Y}\frac{\partial \bold Y}{\partial \bold W} = \bold X^\top \bold{\tilde{L}}_{sym}^\top \frac{\partial \mathcal{L}}{\partial \bold Y}\label{eq3}
    \end{equation}
\end{description}



where the parameters of graph convolution are $\boldsymbol{\theta}_{GC}=[\bold W;\bold b], \bold W\in\mathbb{R}^{d_{in}\times d_{out}}, \bold b\in \mathbb{R}^{d_{out}}$. 
We consider safely computing the two processes under the limitation that data privacy should be bi-directionally protected for these processes, where the parties cannot have access to another one's data (i.e. $\mathcal{P}_1$ cannot infer the adjacent matrix $\bold A$ and $\mathcal{P}_2$ cannot infer the node features $\bold X$ during computation). We follow \cite{chen2020secure} to assume that all the parties are honest and curious. Different from works that consider each party to own user-user and user-item interactions partially, we attempt to apply GNN modules to the\underline{ \textbf{social-data-fully-inaccessible}} vertical federated social recommendation. 

\section{Methodology}

\subsection{Motivation}
After social aggregation in Eq.(\ref{eq1}) and Eq.(\ref{eq2}), $\mathcal{P}_1$ obtains the output $\bold Y$ for further computation of the loss $\mathcal{L}$. To optimize the model, $\mathcal{P}_1$ uses $\frac{\partial \mathcal{L}}{\partial \bold Y}$ to compute the derivate of node features $\frac{\partial \mathcal{L}}{\partial \bold X}$.
We notice that a key computation paradigm, multiplying three matrices, repeatedly appears in both forward and backward processes.
Further, if we let the parameter matrix $\bold W$ be kept by $\mathcal{P}_2$ that owns $\bold{\tilde{L}}_{sym}$, the matrices on both sides and the matrix at the middle for each equation will be kept by different parties. In addition, the left-side result of each equation will be only needed by the one that owns the middle matrix. 
This observation motivates us to consider such a problem

    \textit{Given the matrices $\bold L\in \mathbb{R}^{p\times q},\bold N\in \mathbb{R}^{r\times s}$ owned by the party $p_1$ and the matrix $\bold M\in \mathbb{R}^{q\times r}$ owned by the party $p_2$, how can we design an algorithm to satisfy the two requirements below}
    \begin{enumerate}
        \item[$\textbf{R1.}$]  \textit{the party $p_2$ obtains the multiplication $\bold J=\bold L\bold M\bold N$ without exposing $\bold M$ to the party $p_1$.}
        \item[$\textbf{R2.}$]   \textit{the party $p_2$ cannot infer $\bold L$ and $\bold N$ from $\bold J$ and $\bold M$}.
    \end{enumerate}

As long as the above problem is solved, the computing processes of a graph convolution 
 operator can be done without leaking data privacy. Therefore, we now focus on how to find a solution to this problem with the theoretical guarantee of privacy-preserving. 

 \begin{figure}[tbp]
\centering
\includegraphics[width = \linewidth]{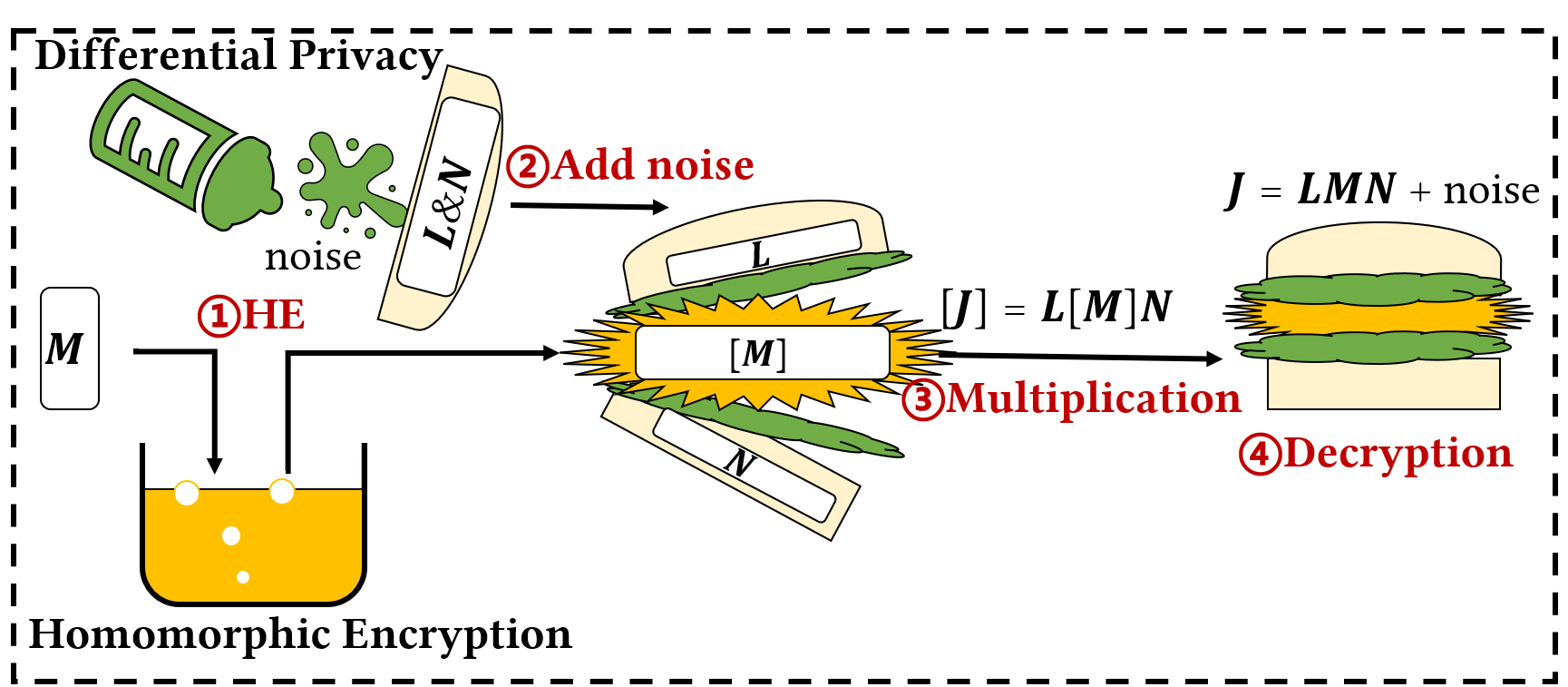}
\caption{The framework of the \textit{Sandwich Encryption} that computes arbitrary triple-matrix multiplication $\textbf{J=LMN}$ in a privacy-preserved way.}
\label{fig2}
\end{figure}
\subsection{Sandwich encryption}
\subsubsection{Solution to R1}
For the first requirement, each time there is a need to compute $\bold J=\bold L\bold M\bold N$, the party first $p_2$ encrypts the matrix $\bold M$ with the public key $\mathscr{P}_{pub,2}$ by simply using \textit{Homomorphic Encryption} (e.g. Paillier \cite{fazio2017homomorphic}). Then, the ciphertext $[\bold M]_{\mathscr{P}_{pub,2}}$ is sent to the party $p_1$ to compute $[\bold J]_{\mathscr{P}_{pub,2}} = \bold L[\bold M]_{\mathscr{P}_{pub,2}}\bold N$, and the result is returned to $p_2$. By decrypting the result with the private key $\mathscr{P}_{prv,2}$, $p_2$ can know $\bold J$ without leaking $\bold M$ to $p_1$. 
\subsubsection{Solution to R2}

Now we discuss how to protect privacy for $\bold L$ and $\bold M$. 
\paragraph{Database-level protection} Since $p_2$ doesn't know the exact values of both the two side matrices, it brings significant challenges for $p_2$ to steal information about them from $\bold J$ and $\bold M$. To better illustrate this, we take an example where all variables of the equation $j=lmn$ are scalars, and we can thus infer that $j/m=ln$, which indicates there are infinite combinations of $l$ and $n$ for any given  $j\ne 0, m\ne 0$. For the matrix case, we illustrate the protection on the database level through Theorem 1.
\begin{theorem}\label{theorem1}
    Given $\bold J=\bold L\bold M\bold N$ where all matrices are not zero matrices, there exists infinite combinations of $\bold N'\ne \bold N, \bold L'\ne\bold L$ such that $\bold J=\bold L'\bold M\bold N'$.
\end{theorem}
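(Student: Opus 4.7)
The plan is to give a direct constructive proof via a scaling trick, which produces an entire one-parameter family of admissible pairs $(\bold L', \bold N')$ and avoids any case analysis on the rank or null spaces of $\bold M$. Concretely, I would pick any scalar $c \in \mathbb{R} \setminus \{0,1\}$ and define $\bold L' = c\bold L$ together with $\bold N' = c^{-1}\bold N$; bilinearity of matrix multiplication immediately yields $\bold L' \bold M \bold N' = c \cdot c^{-1} \cdot \bold L \bold M \bold N = \bold J$, so the product condition is satisfied for free.

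Next I would verify the two inequality constraints using the nonzero hypothesis on $\bold L$ and $\bold N$. Since $c \ne 1$ and $\bold L \ne \bold 0$, the difference $\bold L' - \bold L = (c-1)\bold L$ is nonzero, hence $\bold L' \ne \bold L$; the identical argument with $c^{-1} \ne 1$ and $\bold N \ne \bold 0$ gives $\bold N' \ne \bold N$. Because the parameter $c$ ranges over an uncountable set, this single construction already exhibits infinitely many distinct valid pairs, which is all the theorem asks for.

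If the authors want to emphasize why $\mathcal{P}_2$ is genuinely blocked from recovering $(\bold L, \bold N)$, I would append a short remark describing a richer family built from the kernels of $\bold M$: whenever the left null space of $\bold M$ contains a nonzero vector $\bold u$, the shift $\bold L'' = \bold L + \bold v \bold u^\top$ for arbitrary $\bold v \in \mathbb{R}^{p}$ satisfies $\bold L'' \bold M = \bold L \bold M$, and a symmetric shift is available on the right of $\bold M$ using its right null space; composing these shifts with the scaling produces a genuinely multi-parameter family of ambiguities. These extra constructions are not needed for the proof, however, because scaling alone supplies the required infinite set.

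The main obstacle is not mathematical but expository: the result is nearly immediate, so the real care lies in presenting it so that the reader sees it as meaningful evidence for the database-level privacy claim rather than a triviality. I would therefore state the scaling construction first as the proof proper, and then use the null-space observation as a strengthening remark that motivates the subsequent differential-privacy layer the authors add on top.
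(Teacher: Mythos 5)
Your proof is correct, and it is a genuinely different (and cleaner) route than the one the paper takes. The paper's main argument perturbs a single entry of $\bold L$ by a small $\delta_{11}$, argues that the rank condition $\mathrm{rank}(\bold L'\bold M)=\mathrm{rank}([\bold L'\bold M;\bold J])$ is preserved so that the linear system $(\bold L'\bold M)\bold X=\bold J$ remains solvable, and then takes $\bold N'$ to be a solution; the scaling trick $\bold L'\leftarrow r\bold L',\ \bold N'\leftarrow \tfrac1r\bold N'$ appears only as a closing remark. Your version promotes that scaling observation to the entire proof, which is rigorous and immediate, whereas the paper's rank-preservation step (``the influence of $\Delta\bold Z$ on the rank can be easily eliminated by setting a small enough value of $\delta_{11}$'') is actually the shakiest part of its argument, since rank is only lower semi-continuous and a small perturbation can in principle enlarge the column space of $\bold L'\bold M$ without its still containing the columns of $\bold J$. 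What the paper's construction buys, and what your main proof does not, is a family of alternatives that are not scalar multiples of the true $(\bold L,\bold N)$: for the database-level privacy claim this matters, because the scaling family alone would let $p_2$ recover $\bold L$ up to a global constant, which is essentially full recovery of the adjacency structure. You anticipate this with your null-space remark, but note that the remark is conditional --- it requires $\bold M$ to have a nontrivial left or right null space, which fails when $\bold M$ has full row (resp.\ column) rank --- so if you want the stronger, privacy-relevant statement you would still need something like the paper's perturbation argument (made rigorous) rather than the kernel shifts alone.
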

\begin{proof}
    See Appendix A.2.
\end{proof}

Therefore, without knowing $\bold L$ (or $\bold N$), $p_2$ cannot fully recover $\bold N$ (or $\bold L$), leading to the \textit{database-level} privacy protection.
However, this barrier fails to protect the privacy of the two-side matrices at the element level. For example, if there are only two users' embeddings in $\bold M\in\mathbb{R}^{2\times d_{in}}$ and one of the two embeddings happens to be zero, we can easily infer whether the two users have social interactions from the result $\bold J\in \mathbb{R}^{2\times d_{out}}$ by recognizing whether the aggregated embeddings corresponding to the zero embedding are still zero. 

\begin{algorithm}[t]
    \caption{\textit{Sandwich Encryption Framework}}

    \label{alg:algo1}
    \begin{algorithmic}[1]
        \STATE{\textbf{Input:} The party $p_1$ owning $(\bold L, \bold N)$, the party $p_2$ owning $\bold M$, differential privacy process $g_{dp}(\cdot)$, the key pair $<\mathscr{P}_{pub,2}, \mathscr{P}_{prv,2}>$ of $p_2$} 
\STATE{\textbf{Out:} $\bold J'$ to $p_2$}  
      \STATE $p_2$ encrypts $\bold M$ with its public key $\mathscr{P}_{pub,2}$ to obtain $[\bold M]$ by \textit{Homomorphic Encryption}, and send it to $p_1$.
      \STATE $p_1$ calculate $[\bold J']=g_{dp}(\bold L, \bold N, [\bold M])$ such that $[\bold J']=\bold L[\bold M]\bold N + \epsilon_{dp}$, and send $[\bold J']$ to $p_2$.
      \STATE $p_2$ decrypts $[\bold J']$ with its private key $\mathscr{P}_{prv,2}$ to obtain $\bold J'$.
    \end{algorithmic}
\end{algorithm}

\paragraph{Element-level protection} To further enhance privacy protection for the two-side matrices at the element level, we introduce differential privacy (DP) noise \cite{dwork2006differential} to the computed result $\bold J$. DP offers participants in a database the
compelling assurance that information from
datasets is virtually indistinguishable whether or not someone's
personal data is included. 
Since the object to be protected can be of high dimension, we leverage the advanced matrix-level DP mechanism, aMGM, introduced by \cite{du2023dp, yang2021imgm} to enhance the utility of the computation.


\begin{definition}[analytic Matrix
Gaussian Mechanism \cite{du2023dp}] For a function $f(\bold X)\in \mathbb R^{m\times n}$ and a matrix variate $\bold Z\sim \mathcal{MN}_{m,n}(\bold 0, \bold\Sigma_1, \bold\Sigma_2)$, the analytic Matrix Gaussian Mechanism is defined as 
\begin{equation}
    \text{aMGM}(f(\bold X)) = f(\bold X) + \bold Z \label{eq_amgm}
\end{equation}
where $\mathcal{MN}_{m,n}(\bold 0, \bold\Sigma_1, \bold\Sigma_2)$ denotes matrix gaussian distribution .
\end{definition}
\begin{definition}[Matrix Gaussian Distribution\cite{yang2021imgm}] The probability density function for the $m\times n$ matrix-valued random variable $\bold Z$ which follows the matrix Gaussian distribution $ \mathcal{MN}_{m,n}(\bold M, \bold\Sigma_1, \bold\Sigma_2)$ is 
\begin{equation}
\text{Pr}(\bold Z|\bold M, \bold\Sigma_1, \bold\Sigma_2)=\frac{\exp{\frac{1}{2}\|\bold U^{-1}(\bold Z-\bold M)\bold V^{-\top}\|_F^2}}{(2\pi)^{mn/2}|\bold\Sigma_2|^{n/2}|\bold\Sigma_1|^{m/2}}
\end{equation}
where $\bold U\in \mathbb R^{m\times m},\bold V\in \mathbb R^{n\times n}$ are invertible matrices and $\bold U\bold U^\top=\bold \Sigma_1, \bold V\bold V^\top=\bold \Sigma_2$. $|\cdot|$ is the matrix determinant and $\bold M\in \mathbb R^{m\times n}, \bold \Sigma_1\in \mathbb R^{m\times m}, \bold \Sigma_2\in \mathbb R^{n\times n}$ are respectively the mean, row-covariance, column-covariance matrices. 
\end{definition}
The privacy protection is guaranteed by Lemma.\ref{lemma_1}
\begin{lemma}[DP of aMGM \cite{du2023dp}]\label{lemma_1}
For a query function $f$, aMGM satisfies $(\epsilon, \delta)-DP$, iff
\begin{equation}
    \frac{s_2(f)}{b} \le \sigma_m(\bold U)\sigma_n(\bold V)
\end{equation}
where $b$ is decided by $(\epsilon, \delta)$ and $s_2(f)$ is the $\text{L}_2$-sensitivity, $\sigma_m(\bold U)$ and $\sigma_n(\bold V)$ are respectively the smallest singular values of $\bold U$ and $\bold V$. 
\end{lemma}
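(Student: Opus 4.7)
The plan is to reduce the matrix-variate Gaussian mechanism to the standard analytic Gaussian mechanism (aGM) of Balle and Wang on a vectorized output, and then relate the resulting noise-to-sensitivity condition to the smallest singular values $\sigma_m(\bold U)$ and $\sigma_n(\bold V)$.

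First I would rewrite the noise in its generative form $\bold Z = \bold U\bold E\bold V^\top$ with $\bold E$ having i.i.d.\ standard normal entries, so that $\text{vec}(\bold Z)\sim \mathcal{N}(\bold 0,\bold\Sigma_2\otimes \bold\Sigma_1)$. After vectorization, the mechanism $\text{vec}(f(\bold X))+\text{vec}(\bold Z)$ is a standard multivariate Gaussian mechanism with covariance $\bold\Sigma_2\otimes \bold\Sigma_1$, so the privacy analysis can be carried out in the whitened coordinates by left-multiplying by $(\bold\Sigma_2\otimes \bold\Sigma_1)^{-1/2}=\bold V^{-\top}\otimes \bold U^{-1}$.

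Second, I would apply the Balle--Wang characterization of aGM, which states that an isotropic Gaussian mechanism satisfies $(\epsilon,\delta)$-DP iff $\Delta/1\le b(\epsilon,\delta)$, where $\Delta$ is the worst-case $\ell_2$-distance of the mean shift in the whitened space. Using the Kronecker identity $(\bold V^{-\top}\otimes \bold U^{-1})\,\text{vec}(\bold M)=\text{vec}(\bold U^{-1}\bold M \bold V^{-\top})$, this whitened distance equals $\|\bold U^{-1}(f(\bold X)-f(\bold X'))\bold V^{-\top}\|_F$ over neighboring $\bold X,\bold X'$.

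Third, I would bound this Frobenius norm by the operator-Frobenius-operator inequality $\|\bold A\bold B\bold C\|_F\le \|\bold A\|_2\|\bold B\|_F\|\bold C\|_2$, together with $\|\bold U^{-1}\|_2=1/\sigma_m(\bold U)$ and $\|\bold V^{-\top}\|_2=1/\sigma_n(\bold V)$, yielding an effective sensitivity of at most $s_2(f)/(\sigma_m(\bold U)\sigma_n(\bold V))$. Combined with the aGM condition this gives the sufficient direction of the stated inequality.

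The main obstacle is the necessity (the ``only if'' direction), since the submultiplicative bound is generally not tight. To close this gap I would construct a worst-case pair of neighboring inputs whose difference $f(\bold X)-f(\bold X')$ is a rank-one matrix aligned with the left-singular vector of $\bold U$ associated with $\sigma_m(\bold U)$ and the right-singular vector of $\bold V$ associated with $\sigma_n(\bold V)$, scaled to saturate $s_2(f)$. For such a difference the inequality $\|\bold U^{-1}\bold M\bold V^{-\top}\|_F=\|\bold M\|_F/(\sigma_m(\bold U)\sigma_n(\bold V))$ becomes an equality, and then the tightness of Balle--Wang's aGM characterization forces the stated bound. Carrying out this alignment argument cleanly, and verifying that such a worst-case query is admissible under the sensitivity assumption on $f$, is the delicate part; the remainder is bookkeeping via standard Kronecker and norm identities.
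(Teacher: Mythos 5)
The paper does not actually prove this lemma: it is imported verbatim from the cited aMGM/iMGM literature (\cite{du2023dp}, \cite{yang2021imgm}), so there is no in-paper proof to compare against. Judged on its own, your reduction is the standard one used in those sources and it is sound: writing $\bold Z=\bold U\bold E\bold V^\top$ with i.i.d.\ standard normal $\bold E$ gives $\mathrm{vec}(\bold Z)\sim\mathcal N(\bold 0,\bold\Sigma_2\otimes\bold\Sigma_1)$, whitening by $\bold V^{-1}\otimes\bold U^{-1}$ turns the mechanism into an isotropic Gaussian mechanism whose effective mean shift is $\|\bold U^{-1}(f(\bold X)-f(\bold X'))\bold V^{-\top}\|_F$ (this is exactly the quantity appearing in the paper's Lemma~4.7 on the privacy loss), and the bound $\|\bold A\bold B\bold C\|_F\le\|\bold A\|_2\|\bold B\|_F\|\bold C\|_2$ with $\|\bold U^{-1}\|_2=1/\sigma_m(\bold U)$, $\|\bold V^{-\top}\|_2=1/\sigma_n(\bold V)$ yields the sufficient direction via Balle--Wang. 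You are also right that the ``iff'' is the delicate part: as stated it is a worst-case claim over queries attaining the sensitivity, and your alignment construction (a rank-one difference $s_2(f)\,\bold p\,\bold w^\top$ with $\bold p$ the left singular vector of $\bold U$ for $\sigma_m(\bold U)$ and $\bold w$ chosen analogously for $\bold V$) is precisely what makes the submultiplicative bound tight, after which the tightness of the Balle--Wang characterization forces necessity. The only caveat worth recording is that a specific query $f$ need not admit such an aligned neighboring pair, so the ``only if'' should be read as holding for the worst case over admissible queries --- a caveat already implicit in the cited works and not a defect of your argument.
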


The general procedure of the \textit{Sandwich Encryption} is listed in Algorithm.\ref{alg:algo1}. The encryption process is like making a sandwich where the two pieces of bread are corresponding to the two-side matrices and the middle matrix is the meat in the sandwich as shown in Figure \ref{fig2}. 
By properly pre-processing the materials, the data privacy of each material can be preserved.
While we apply DP to enhance privacy protection, how to preserve the utility of these computing processes as much as possible still brings non-trivial challenges. To this end, we design the Privacy-Preserving Two-Party Graph Convolution Network (P4GCN) to enhance the utility of the model while applying DP.
\begin{figure*}[tbp]
\centering
\includegraphics[scale=0.55]{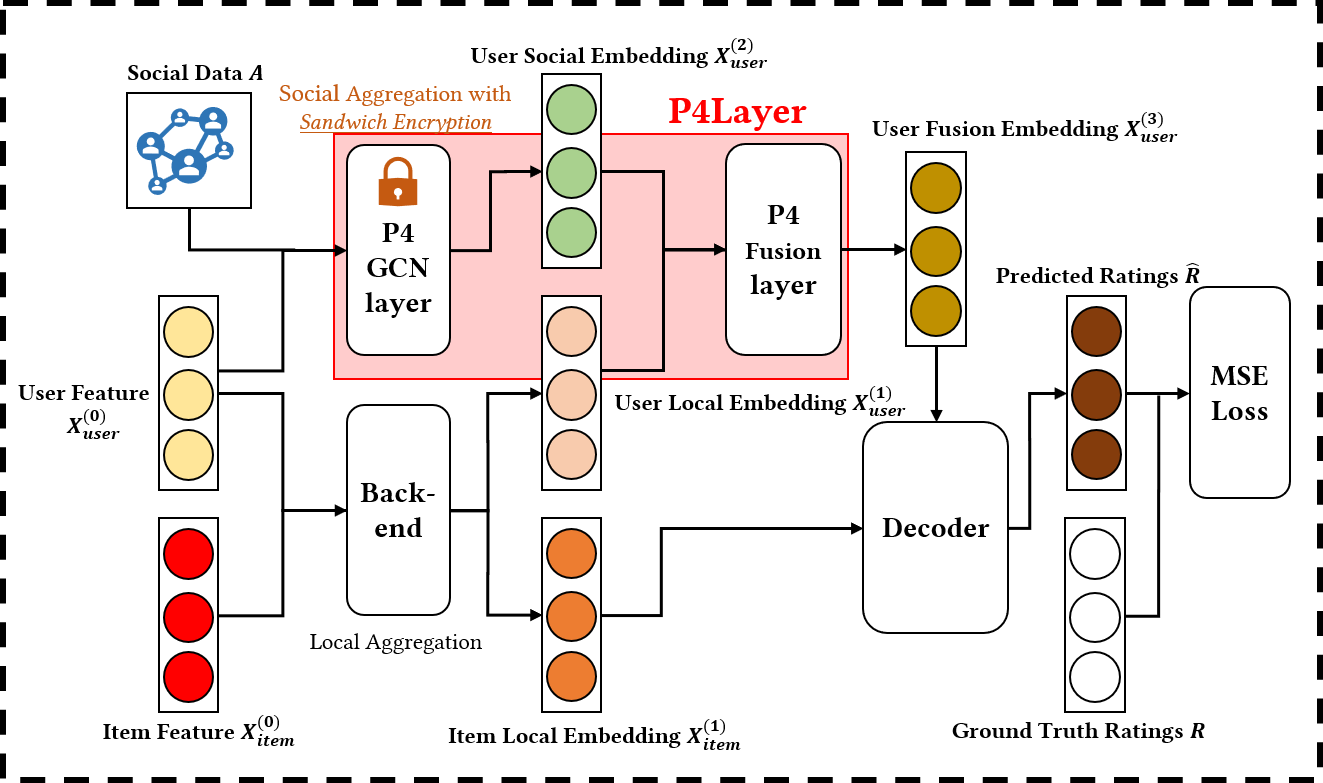}
\caption{The training workflow of the proposed P4GCN}.
\label{fig3}
\end{figure*}
\subsection{P4GCN}
\subsubsection{Architecture}

\paragraph{Overview.} The architecture of P4GCN is as shown in Figure \ref{fig3}. During each training iteration, $\mathcal{P}_1$ first locally aggregates the user features $\bold X_{user}^{(0)}$ and the item features $\bold X_{item}^{(0)}$ by the backend (e.g., LightGCN\cite{he2020lightgcn}) into embeddings $\bold X_{user}^{(1)}$ and $\bold X_{item}^{(1)}$. Then, $\mathcal{P}_1$ uses Algo.\ref{alg:algo1} to collaboratively compute the user social embeddings that are aggregated on the social data by the GCN layer with $\mathcal{P}_2$. After obtaining the user social embeddings $\bold X_{user}^{(2)}$, $\mathcal{P}_1$ uses the fusion layer to aggregate $\bold X_{user}^{(1)}$ and $\bold X_{user}^{(2)}$ to construct the new user embeddings $\bold X_{user}^{(3)}$ . Finally, both $\bold X_{user}^{(3)}$ and  $\bold X_{item}^{(1)}$ are input into the decoder to obtain the predictions to compute the loss. The backward computation of the social GCN layer is also protected by Algo.\ref{alg:algo1}.

\paragraph{Fusion Layer.} The fusion layer is designed for two reasons. For one thing, the DP mechanism may bring too much noise that leads to the degradation of the model performance. For another thing, the social information of different users may not consistently improve the model's performance but harm it. Therefore, we design the fusion layer to adaptively extract useful information by reweighing the inputs. Concretely, the fusion layer allocates weights to each activation in each user's embeddings by a two-layer MLP with a softmax function and position-wisely fuses them. 
This introduces a chance for the party $\mathcal{P}_1$ to avoid the collaboration significantly reducing local model performance.
\subsubsection{Privacy-Preserved Social Aggregation}
We analyze the sensitivity of the graph convolution and then apply aMGM to its computing processes.
\paragraph{Forward.} During aggregation, the user $i$'s social embedding is specified by $\bold x_i^{(2)} = \bold X^{(2)}_{user, \cdot i} = \bold l_i \bold X\bold W, \bold l_i=\bold{\tilde{L}}_{sym, \cdot i}$, which can be independently computed without queries on other users' social embeddings. Therefore, we focus on the computing sub-process $f_i( \bold l_i, \bold X,\bold W)$ to protect user-level privacy (i.e., the social interaction between any two users).  Given two adjacent social databases $\bold A$ and $\bold A'$ whose elements are the same except one (e.g., $\|\bold A-\bold A'\|_F=1$), the $L_2$-sensitivity of each $f_i, \forall i\in [N]$  is bounded by
\begin{equation}
s_2(f_i) = \max_{A,A'}\| \bold l_i'\bold X\bold W- \bold l_i\bold X\bold W\|_F\le \|(\bold l_i'-\bold l_i)\bold X\|_F\|\bold W\|_F=C^2 s_l(i)
\end{equation}

\label{eq_bound1}
\begin{align}
    s_l(i) = \left \{
\begin{array}{ll}
   (\frac{1}{2}+\frac{1}{2}c_{o})^{1/2},& \text{, $\bold a_i=\bold 0$ }\\
   (\frac{1}{\|\bold a_i\|_1^2+\|\bold a_i\|_1}c_i + \frac{1}{\|\bold a_i\|_1}c_{o})^{1/2},& \text{, else}\label{eq_sen}\\
\end{array}
\right.
\end{align}


where $c_i=\sum_{j=1}^N \frac{a_{ij}+\boldsymbol{1}(i=j)}{\|\bold a_j\|_1+1}\le \|a_i\|_1+1, c_o=\max_{j}\frac{1}{\|\bold a_j\|_1+1}\le 1, s_l(i)\le 2$ always hold for all users. Then, we respectively clip each row of $\bold X$ and the whole $\bold W$ by $\max(1,\frac{\|\cdot\|_2}{C})$ to bound the sensitivity $s_2(f_i)\le C^2s_l(i)$ (e.g., the coefficient $C\in \{0.1,0.08\}$ in experiments) before computation and finally rescale the computed result by the inverse scale factor. We empirically scale $\bold{\tilde{L}}_{sym}$ with $\frac{1}{N}$ in practice. We detail the derivation in Appendix A.1. 

\paragraph{Backward for node features.} The backward process for node features $f_i^{\text{back}}$ is $  \eta\frac{\partial \mathcal{L}}{\partial \bold x_i^{(2)}}= \bold l_i (\eta\frac{\partial \mathcal{L}}{\partial \bold Y})\bold W^\top$. We bound the sensitivity of $f_i^{\text{back}}$ like forward process $f_i$, leading to the same bound
\begin{equation}
    s_2(f_i^{\text{back}})\le C^2s_l(i) \label{eq_bound2}
\end{equation}
\paragraph{Backward for model parameters.} The backward process for model parameters is $ \frac{\partial \mathcal{L}}{\partial \bold W} = \bold X^\top \bold{\tilde{L}}_{sym}^\top \frac{\partial \mathcal{L}}{\partial \bold Y}$.
 We notice that the actual function sensitivity can be significantly influenced by the Frobenius norms of all the three matrices that scale with the user number $N$, leading to large noise added to the computed result. Therefore, we seek for an alternative to this computing process by splitting $\bold W=\bold W_{\mathcal{P}_2}\bold W_{\mathcal{P}_1}, \bold W_{\mathcal{P}_2}\in\mathbb R^{d_{in}\times d_{in}}, \bold W_{\mathcal{P}_1}\in\mathbb R^{d_{in}\times d_{out}}$ and freeze $\bold W_{\mathcal{P}_2}$ that is kept by the party $\mathcal{P}_2$ without updating it. We initialize $\bold W_{\mathcal{P}_2}$ by normal distribution to approximate full rank and then clip it only once before training starts. The parameter $\bold W_{\mathcal{P}_1}$ is updated by $\mathcal{P}_1$ without any communication to $\mathcal{P}_2$ since components in $\frac{\partial \mathcal{L}}{\partial \bold W_{\mathcal{P}_1}} =(\bold{\tilde{L}}_{sym} \bold X\bold W_{\mathcal{P}_1})^\top \frac{\partial \mathcal{L}}{\partial \bold Y}$ are already known by $\mathcal{P}_1$.
 
\begin{table*}
\centering
\caption{ Comparison results of different models in terms of model accuracy (in RMSE and MAE). The optimal (second optimal) result of each column is bolded (underlined). }
\begin{tblr}{
  row{1} = {c},
  row{2} = {c},
  column{2} = {c},
  cell{1}{1} = {c=2,r=2}{},
  cell{1}{3} = {c=2}{},
  cell{1}{5} = {c=2}{},
  cell{1}{7} = {c=2}{},
  cell{1}{9} = {c=2}{},
  cell{3}{1} = {r=5}{c},
  cell{3}{3} = {c},
  cell{3}{4} = {c},
  cell{3}{5} = {c},
  cell{3}{6} = {c},
  cell{3}{7} = {c},
  cell{3}{8} = {c},
  cell{4}{3} = {c},
  cell{4}{4} = {c},
  cell{4}{5} = {c},
  cell{4}{6} = {c},
  cell{4}{7} = {c},
  cell{4}{8} = {c},
  cell{5}{3} = {c},
  cell{5}{4} = {c},
  cell{5}{5} = {c},
  cell{5}{6} = {c},
  cell{5}{7} = {c},
  cell{5}{8} = {c},
  cell{6}{3} = {c},
  cell{6}{4} = {c},
  cell{6}{5} = {c},
  cell{6}{6} = {c},
  cell{6}{7} = {c},
  cell{6}{8} = {c},
  cell{7}{3} = {c},
  cell{7}{4} = {c},
  cell{7}{5} = {c},
  cell{7}{6} = {c},
  cell{7}{7} = {c},
  cell{7}{8} = {c},
  cell{8}{1} = {r=4}{c},
  cell{8}{3} = {c},
  cell{8}{4} = {c},
  cell{8}{5} = {c},
  cell{8}{6} = {c},
  cell{8}{7} = {c},
  cell{8}{8} = {c},
  cell{9}{3} = {c},
  cell{9}{4} = {c},
  cell{9}{5} = {c},
  cell{9}{6} = {c},
  cell{9}{7} = {c},
  cell{9}{8} = {c},
  cell{10}{3} = {c},
  cell{10}{4} = {c},
  cell{10}{5} = {c},
  cell{10}{6} = {c},
  cell{10}{7} = {c},
  cell{10}{8} = {c},
  cell{11}{3} = {c},
  cell{11}{4} = {c},
  cell{11}{5} = {c},
  cell{11}{6} = {c},
  cell{11}{7} = {c},
  cell{11}{8} = {c},
  vline{2} = {-}{},
  vline{4,6,8} = {1}{},
  vline{3,5,7,9} = {-}{},
  hline{1,12} = {-}{0.08em},
  hline{2} = {3-10}{},
  hline{3,8} = {-}{},
  hline{4-7,9-11} = {2}{},
}
\textbf{Method} &             & \textbf{FilmTrust}   &                      & \textbf{CiaoDVD}     &                      & \textbf{Douban}      &                      & \textbf{Epinions} &     \\
                &             & RMSE                 & MAE                  & RMSE                 & MAE                  & RMSE                 & MAE                  & RMSE              & MAE \\
\textbf{Local}  & PMF         & $0.8007$             & $0.6106$             & $1.2245$             & $0.9651$             & $0.8361$             & $0.6300$             & $1.2487$          & $0.9721$    \\
                & NeuMF       & $0.8287$             & $0.6319$             & $1.1842$             & $0.8839$             & $0.7894$             & $0.6222$             & $1.1285$          & $\bold{0.8020}$    \\
                & GCN         & $0.8765$             & $0.6796$             & $1.1076$             & $0.8383$             & $0.7989$             & $0.6346$             & $1.1513$          & $\underline{0.8177}$   \\
                & LightGCN    & $0.7960$             & $0.6079 $            & $1.1186$             & $0.8396$             & $0.7892$             & $0.6209$             & $1.0746$          & $0.8412$     \\
                & FeSog$^{-}$ & $0.8029$             & $0.6118$             & $1.2314$             & $0.9741$             & $0.8331$             & $0.6498$             & $1.2171$                  & $0.9530$    \\
\textbf{Social} & SeSoRec     & $0.8009$             & $0.6106$             & $1.1988$             & $0.9635$             & $0.8171$             & $0.6316$             & $1.2131$          & $0.9598$     \\
                & S3Rec       & $0.8009$             & $0.6106$             & $1.1988$             & $0.9635$             & $0.8171$             & $0.6316$             & $1.2131$          & $0.9598$    \\
                & P4GCN       & $\underline{0.7929}$             & $\underline{0.6059}$             & $\bold{1.0776}$             & $\bold{0.8224}$             & $\underline{0.7672}$             & $\bold{0.6023}$      & $\underline{1.0744}$          & $0.8272$    \\
                & P4GCN$^*$   & $\bold{0.7905}$      & $\bold{0.6032}$      & $\underline{1.0803}$      & $\underline{0.8225}$      & $\bold{0.7670}$      & $\underline{0.6035}$             & $\bold{1.0642}$   & $0.8186$  
\end{tblr}
\label{table_main}
\end{table*}
\paragraph{Privacy.} We independently apply aMGM mechanism to each user's social embedding based on its sensitivity bound (e.g., Eq.(\ref{eq_bound1}) and Eq.(\ref{eq_bound2})). Eq.(\ref{eq_sen})
 suggests that the more social relations one user owns, the smaller sensitivity its computing process is, resulting in less noise being injected into the intermediates of this user. The total privacy cost can be estimated by the maximum privacy cost among users according to the parallel composition theorem \cite{dwork2014algorithmic}. We follow \cite{du2023dp} to accumulate privacy costs across iterations based on the privacy loss distribution of aMGM in Lemma.\ref{lemma2}. 
 \begin{lemma}\label{lemma2}[Privacy Loss of aMGM.\cite{yang2021imgm}]
     The privacy loss variable of aMGM follows gaussian distribution $\mathcal{N}(\eta, 2\eta)$ and $\eta$  is given by $\eta=\frac{\|\bold U^{-1}(f(\bold X)-f(\bold X'))\bold V^{-\top}\|_F^2}{2}$.
 \end{lemma}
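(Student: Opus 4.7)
\medskip

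\noindent\textbf{Proof proposal.} The plan is to write the privacy loss random variable of aMGM in closed form, apply a whitening change of variables that reduces the matrix Gaussian noise to i.i.d.\ standard Gaussian entries, and then read off the mean and variance of an affine function of a standard Gaussian.

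First, I would write the output distributions on neighboring databases. By the definition of aMGM, the output on input $\bold X$ is $\bold T = f(\bold X) + \bold Z$ with $\bold Z \sim \mathcal{MN}_{m,n}(\bold 0, \bold\Sigma_1, \bold\Sigma_2)$, so $\bold T \sim \mathcal{MN}_{m,n}(f(\bold X), \bold\Sigma_1, \bold\Sigma_2)$, and similarly for $\bold X'$. Plugging both densities into the privacy loss at an observed output $\bold T$, the normalizing constants cancel and I obtain
\begin{equation*}
L(\bold T) \;=\; \ln\frac{\Pr(\bold T \mid f(\bold X))}{\Pr(\bold T \mid f(\bold X'))} \;=\; \tfrac{1}{2}\bigl\|\bold U^{-1}(\bold T - f(\bold X'))\bold V^{-\top}\bigr\|_F^2 - \tfrac{1}{2}\bigl\|\bold U^{-1}(\bold T - f(\bold X))\bold V^{-\top}\bigr\|_F^2 .
\end{equation*}

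Second, I would whiten. Set $\bold W := \bold U^{-1}(\bold T - f(\bold X))\bold V^{-\top}$; when $\bold T$ is drawn from the mechanism on $\bold X$, the factorization $\bold U\bold U^\top = \bold\Sigma_1$, $\bold V\bold V^\top = \bold\Sigma_2$ together with the density formula imply that $\bold W$ has i.i.d.\ $\mathcal{N}(0,1)$ entries, equivalently $\mathrm{vec}(\bold W) \sim \mathcal{N}(\bold 0, \bold I_{mn})$. Writing $\bold \Delta := \bold U^{-1}(f(\bold X) - f(\bold X'))\bold V^{-\top}$, I then expand the two squared Frobenius norms in $L(\bold T)$ after substituting $\bold U^{-1}(\bold T - f(\bold X'))\bold V^{-\top} = \bold W + \bold \Delta$, giving
\begin{equation*}
L \;=\; \tfrac{1}{2}\|\bold W + \bold \Delta\|_F^2 - \tfrac{1}{2}\|\bold W\|_F^2 \;=\; \langle \bold W, \bold \Delta\rangle_F + \tfrac{1}{2}\|\bold \Delta\|_F^2,
\end{equation*}
where $\langle \cdot,\cdot\rangle_F$ is the Frobenius inner product.

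Third, I would conclude by recognising $L$ as an affine function of a standard Gaussian vector. Since $\langle \bold W, \bold \Delta\rangle_F = \mathrm{vec}(\bold \Delta)^\top \mathrm{vec}(\bold W)$ with $\mathrm{vec}(\bold W)\sim \mathcal{N}(\bold 0,\bold I_{mn})$, this inner product is $\mathcal{N}(0, \|\bold \Delta\|_F^2)$. Shifting by the deterministic term $\tfrac{1}{2}\|\bold \Delta\|_F^2$ yields
\begin{equation*}
L \;\sim\; \mathcal{N}\!\bigl(\tfrac{1}{2}\|\bold \Delta\|_F^2,\ \|\bold \Delta\|_F^2\bigr) \;=\; \mathcal{N}(\eta, 2\eta)
\end{equation*}
with $\eta = \tfrac{1}{2}\|\bold U^{-1}(f(\bold X)-f(\bold X'))\bold V^{-\top}\|_F^2$, as claimed.

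The main obstacle I expect is the whitening step: one must verify carefully that the map $\bold Z \mapsto \bold U^{-1}\bold Z\bold V^{-\top}$ really sends $\mathcal{MN}_{m,n}(\bold 0,\bold\Sigma_1,\bold\Sigma_2)$ to a standard matrix Gaussian, which is where the Kronecker identity $\mathrm{vec}(\bold A\bold Z\bold B) = (\bold B^\top\otimes \bold A)\mathrm{vec}(\bold Z)$ and the factorizations $\bold\Sigma_i = \bold U\bold U^\top$, $\bold V\bold V^\top$ do the real work. Once this is in hand the rest is a one-line expansion of a squared norm, so the bulk of the writeup is a short but careful density computation rather than new probabilistic content.
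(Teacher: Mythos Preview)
Your argument is correct and is exactly the standard derivation for the privacy loss of a (matrix) Gaussian mechanism: write the log-likelihood ratio, whiten using the Cholesky-type factors $\bold U,\bold V$ so that the noise becomes i.i.d.\ standard normal, and then expand $\tfrac12\|\bold W+\bold\Delta\|_F^2-\tfrac12\|\bold W\|_F^2$ to obtain an affine function of a standard Gaussian vector with mean $\tfrac12\|\bold\Delta\|_F^2$ and variance $\|\bold\Delta\|_F^2$. The whitening step is indeed justified by the Kronecker identity you mention, since $\mathrm{vec}(\bold Z)\sim\mathcal N(\bold 0,\bold\Sigma_2\otimes\bold\Sigma_1)$ and $(\bold V^{-1}\otimes\bold U^{-1})(\bold\Sigma_2\otimes\bold\Sigma_1)(\bold V^{-1}\otimes\bold U^{-1})^\top=\bold I_{mn}$.

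As for comparison with the paper: there is nothing to compare. The paper does not prove Lemma~\ref{lemma2}; it is quoted as an external result from \cite{yang2021imgm} and used only to accumulate privacy costs across iterations. Your writeup therefore supplies a self-contained proof that the paper omits, and it matches the argument one finds in the cited source. One cosmetic point: the density displayed in the paper is missing a minus sign in the exponent, so when you write this up be sure to use the correct form $\exp\bigl(-\tfrac12\|\bold U^{-1}(\bold Z-\bold M)\bold V^{-\top}\|_F^2\bigr)$ rather than copying the paper's typo.
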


\subsubsection{Efficiency}
\paragraph{Batch-wise optimization.} We now show how to optimize the model in a batch-wise manner for efficiency. The full batch training will bring large communication and computation costs (e.g., frequently encrypting large matrices and transmitting the expanded ciphertext). To tackle this issue, given a batch of records 
, 
we denote the users in the current batch as $\mathcal{B}\in \mathbb{R}^{|\mathcal{B}| \times N}, |\mathcal{B}|\le |B|$. Then, the corresponding computing process is
\begin{equation}
\bold Y_{B} =(\mathcal{B}\bold{\tilde{L}}_{sym}) \bold X\bold W, \frac{\partial \mathcal{L}}{\partial \bold X_{B}}= (\mathcal{B}\bold{\tilde{L}}_{sym}\mathcal{B}^\top) \frac{\partial \mathcal{L}}{\partial \bold Y_{B}}\bold W^\top 
\end{equation}

In this way, the party $\mathcal P_2$ can store the full ciphertext $[\bold X]$ that will be only encrypted once and batch-wisely update it by $\eta[\frac{\partial\mathcal{L}}{\partial \bold X_{B}}]$. Unlike full batch training, the embeddings of users out of the batch cannot be updated. Otherwise, the social interactions will be easily exposed to the recommender. 

\paragraph{Communication.}
The communication cost lies in the transmission of the encrypted middle matrices (i.e. $\bold X, \frac{\partial \mathcal{L}}{\partial \bold Y_{B}}$) and the results (i.e. $\bold Y_B$, $\frac{\partial \mathcal{L}}{\partial \bold X_{B}}$). Since $\bold X$ is only encrypted and transmitted once, the total communication cost is $\mathcal{O}(Nd+TBd)$ over iterations $T$ where  $\frac{\partial \mathcal{L}}{\partial \bold Y_{B}},\bold Y_B\in \mathbb{R}^{|\mathcal{B}|\times d_{out}}$, $\frac{\partial \mathcal{L}}{\partial \bold X_{B}}\in \mathbb{R}^{|\mathcal{B}|\times d_{in}}$ and $d=\max{(d_{in}, d_{out})}$.

\section{Evaluation}\label{sec5}

\subsection{Experimental Setting}
\paragraph{Datasets}

We use four social recommendation datasets to validate the effectiveness of the proposed method: Filmtrust \cite{guo2013novel}, CiaoDVD \cite{guo2014etaf}, Douban \cite{douban}, and Epinions \cite{massa2007trust}. Specifically, we set the social data owned by $\mathcal{P}_2$ and other data owned by $\mathcal{P}_1$. We show the statistics of the datasets in Appendix D.

\begin{figure*}[ht]
    \centering

    \begin{subfigure}[t]{0.24\textwidth}
        \includegraphics[width=\linewidth]{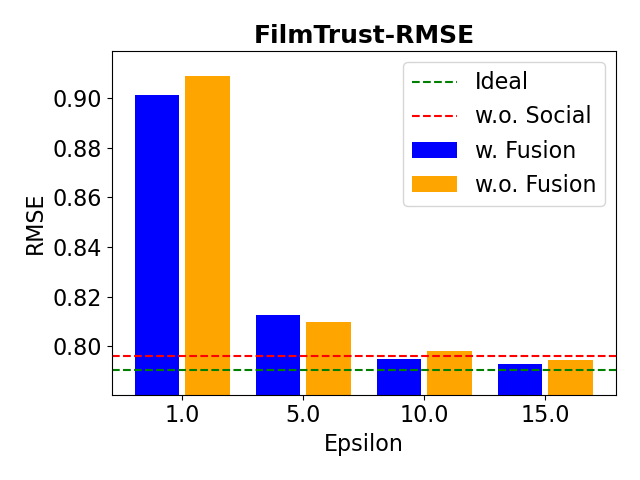}
        \label{fig:image1}
    \end{subfigure}
    \begin{subfigure}[t]{0.24\textwidth}
        \includegraphics[width=\linewidth]{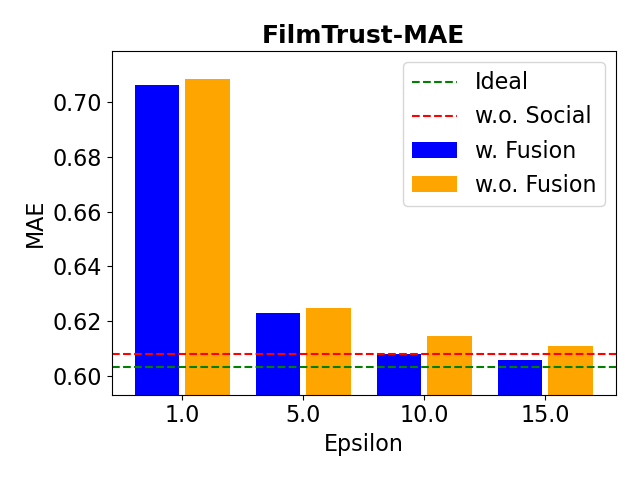}
        \label{fig:image2}
    \end{subfigure}
    \begin{subfigure}[t]{0.24\textwidth}
        \includegraphics[width=\linewidth]{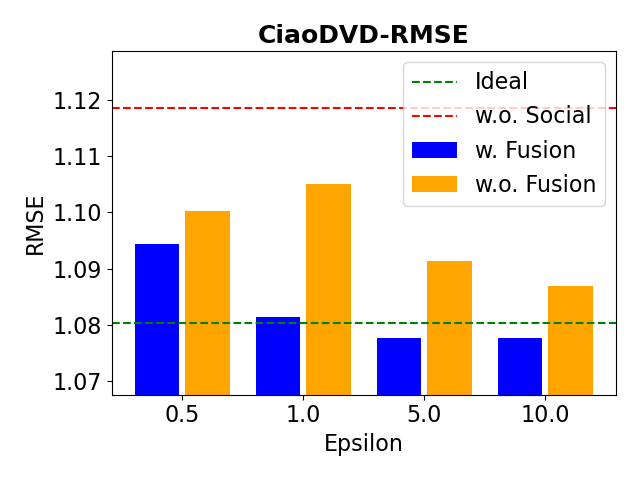}
        \label{fig:image3}
    \end{subfigure}
    \begin{subfigure}[t]{0.24\textwidth}
        \includegraphics[width=\linewidth]{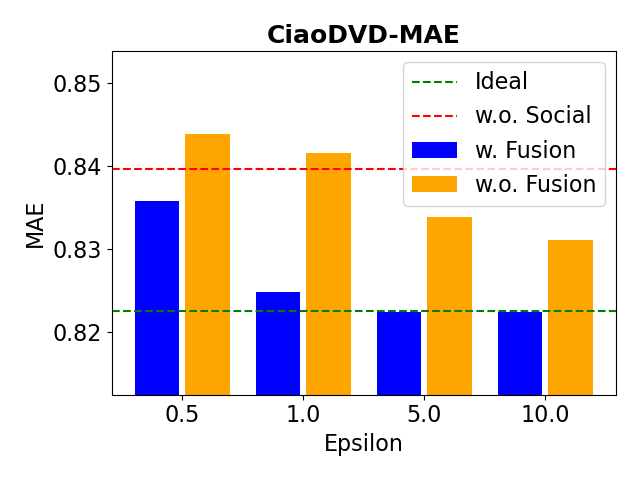}
        \label{fig:image4}
    \end{subfigure}

    \begin{subfigure}[t]{0.24\textwidth}
        \includegraphics[width=\linewidth]{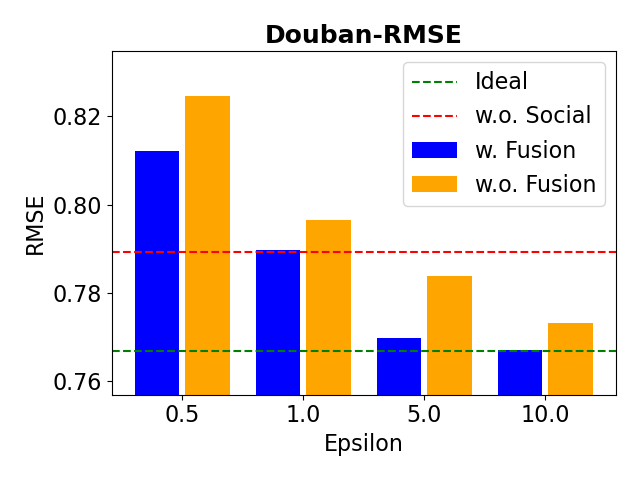}
        \label{fig:image5}
    \end{subfigure}
    \begin{subfigure}[t]{0.24\textwidth}
        \includegraphics[width=\linewidth]{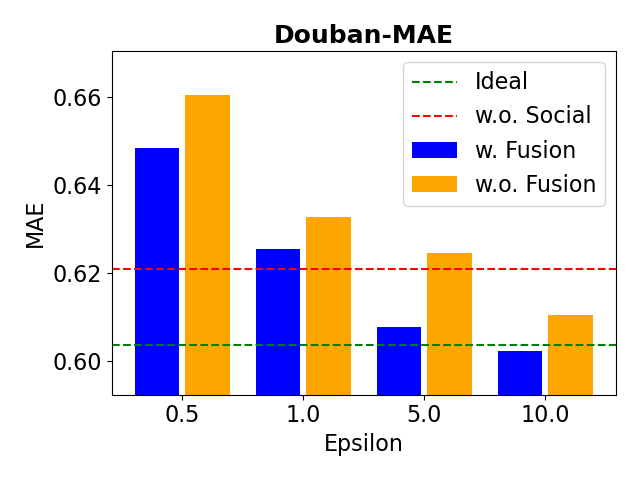}
        \label{fig:image6}
    \end{subfigure}
    \begin{subfigure}[t]{0.24\textwidth}
        \includegraphics[width=\linewidth]{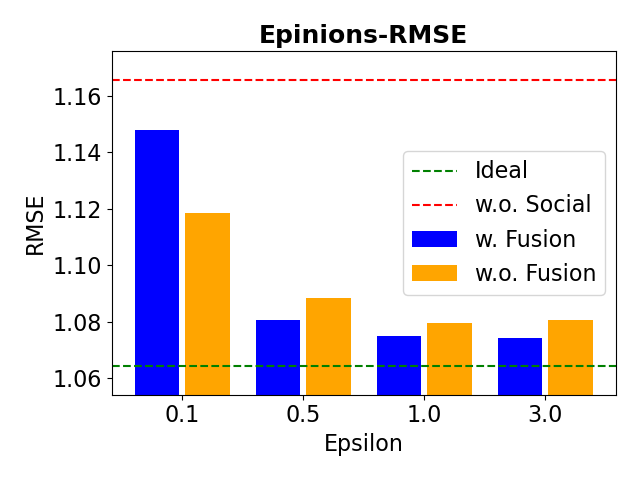}
        \label{fig:image7}
    \end{subfigure}
    \begin{subfigure}[t]{0.24\textwidth}
        \includegraphics[width=\linewidth]{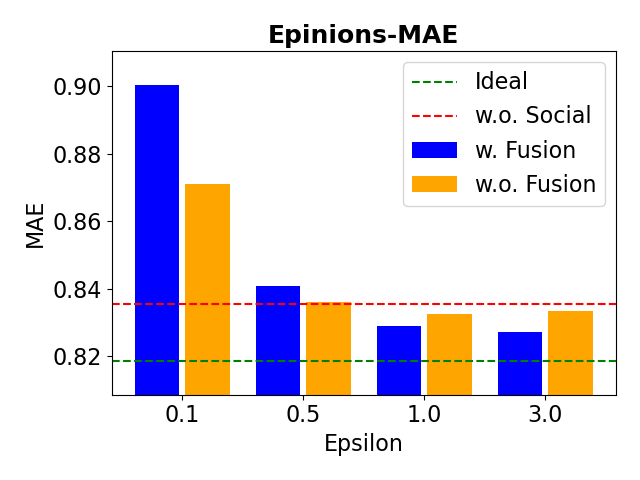}
        \label{fig:image8}
    \end{subfigure}

    \caption{The model performance RMSE and MAE of P4GCN w/w.o. fusion layer v.s. privacy budget $\epsilon$. }
    \label{fig_privacy_budget}
\end{figure*}
\paragraph{Implementation}
All our experiments are implemented on a Ubuntu 16.04.6 server with 64 GB memory, 4 Intel(R) Xeon(R) CPU E5-2630 v4 @ 2.20GHz, 4 NVidia(R) 3090 GPUs, and PyTorch 1.10.1.

\paragraph{Baselines}
We compare P4GCN with two types of baselines. The first type contains traditional methods without using social data. These methods are concluded as follows
\begin{itemize}
    \item \textbf{PMF}\cite{mnih2007probabilistic} is a classic matrix factorization model that only uses rating data on $\mathcal{P}_1$.

    \item \textbf{NeuMF}\cite{he2017neural} is a neuron-network-based matrix factorization method that has superior performance against traditional MF methods. 
    
    \item \textbf{GCN}\cite{bruna2013spectral} is a classic convolutional graph neural network that only uses rating data on $\mathcal{P}_1$.

    \item \textbf{LightGCN}\cite{he2020lightgcn} improves the convolutional graph neural network by reducing the parameters and aggregating the activations of different layers.
    \item \textbf{FeSog$^-$}\cite{liu2022federated} removes the social aggregation module from the original version that requires social links to be stored together with user features, which will break our fundamental assumption of inaccessible social data. We compare FeSog with fully available data in Sec. \ref{sec5.7}
\end{itemize}
The second type contains methods that safely use social data to make social recommendations: 
\begin{itemize}
    \item \textbf{SeSoRec}\cite{chen2020secure} tries to solve the privacy-preserving cross-platform social recommendation problem, but suffers from security and efficiency problems.
    \item \textbf{S}$^{3}$\textbf{Rec}\cite{cui2021exploiting} is the state-of-the-art method that solves the safety problem and improves the efficiency within the scope of matrix factorization on the basis of \textbf{SeSoRec}.
    \item $\textbf{P4GCN}$ (ours) is set to satisfy $(\epsilon, \delta)$-DP guarantee (e.g., $\epsilon$ depends on the dataset) and $\textbf{P4GCN}^*$ corresponds to the ideal case without injecting DP noise.
\end{itemize}

\paragraph{Hyper-parameters} We fix the embedding dimensions $k=64$ of the model for all the datasets. We tune the learning rate $\eta\in\{1e-3, 1e-2, 1e-1, 1, 10, 100, 1000\}$ and batch size $|B|\in \{64, 256, 512, 1024,$ $ 2048, 4096, \text{full}\}$ to achieve each method's optimal results. We respectively limit the privacy budgets of P4GCN by $\epsilon=\{15.0, 10.0, 10.0,$ $ 3.0\}$ and $\delta=1e-4$ across datasets in columns of Table \ref{table_main} (i.e., FilmTrust, CiaoDvd, Douban, and Filmtrust). The hyper-parameter $\beta_{\text{P4GCN}}$ is tuned on $\{0.01, 0.05, 0.1, 0.5, 1.0, 10.0, 100.0\}$ and both $\lambda_{\text{SeSoRec}}$ and $\lambda_{\text{S3Rec}}$ are tuned on $\{1e-4, 1e-3, 1e-2, 1e-1\}$.

\paragraph{Metrics} We follow previous works \cite{fan2019graph} to use Root Mean Square Error (RMSE) and Mean Absolute Error (MAE) as the evaluation metrics of model performance.

\subsection{Model performance}
From Table \ref{table_main}, we find that: (1) P4GCN* without DP consistently improves both MAE and RMSE metrics over all the baselines on the first three datasets (i.e., FilmTrust, CiaoDVD, and Douban) and achieves competitive results (e.g., RMSE$=1.0642$, MAE=$0.8186$) against others' optimal results (e.g., $\text{RMSE}_{\text{LightGCN}}=1.0746$ and  $\text{MAE}_{\text{NeuMF}}=0.8020$). (2) Our proposed Sandwich Encryption Module can well preserve the final model performance over four datasets given proper privacy budges, which achieves the optimal or second optimal results over $87.5\%$ columns. (3) P4GCN exhibits superior performance to traditional matrix-decomposition-based social recommendation (e.g., SeSoRec and S3Rec), especially on datasets of large-scale (e.g., CiaoDVD with 7375 clients and Epinions with 22158 clients). We attribute this enhancement to the adaption of GNN which has a stronger representation ability than the traditional matrix-decomposition-based model in recommendation.

\begin{table}\label{tb_integrate}
\centering
\caption{ The improvement over model performance by integrating P4Layer (i.e., P4) to existing methods.}
\label{tb_integrate}
\begin{tblr}{
  cells = {c},
  cell{1}{1} = {c=2,r=2}{},
  cell{1}{3} = {c=2}{},
  cell{1}{5} = {c=2}{},
  cell{3}{1} = {r=3}{},
  cell{6}{1} = {r=3}{},
  vline{2,3} = {1-9}{},
  hline{1,3,6,9} = {-}{},
}
\textbf{Method}       &                                                  & \textbf{FilmTrust} &          & \textbf{CiaoDVD} &           \\
                      &                                                  & RMSE               & MAE      & RMSE             & MAE       \\
\textbf{PMF}          & \textbf{original}                                & $0.8007$           & $0.6106$ & ~$1.2245$        & $0.9651$~ \\
                      & \textbf{+P4\&DP}                                & $\bold{0.7997}$           & $0.6112$ & $1.2163$         & $0.9648$  \\
                      & {+\textbf{P4-Ideal}}      & $\bold{0.7997}$           & $\bold{0.6105}$ & $\bold{1.2125}$         & $\bold{0.9642}$  \\
\textbf{\textbf{GCN}} & \textbf{original}                                & $0.8765$           & $0.6796$ & $1.1709$         & $0.8731$  \\
                      &   \textbf{+P4\&DP}                             & $0.8569$           & $0.6606$ & $\bold{1.1388}$         & $0.8766$  \\
                      &  {+\textbf{P4-Ideal}}   & $\bold{0.8506}$           & $\bold{0.6486}$ & $1.1414$         & $\bold{0.8598}$  
\end{tblr}
\end{table}

\subsection{Impact of privacy budget $\epsilon$}
\paragraph{Privacy Budget.} We investigate the impact of privacy budget $\epsilon$ on P4GCN in Figure \ref{fig_privacy_budget}, where the red dashed line corresponds to results without leveraging social data and the green dashed line corresponds to the ideal results without adding DP noise. First,  as the privacy budget grows properly, P4GCN  introduces non-trivial improvements over the results without using social information (e.g., the bars below the red dashed lines). Second, our proposed privacy-preserving mechanism can well preserve the performance of the ideal case without adding DP noise (e.g., the green dashed lines), which confirms the effectiveness of our P4GCN in leveraging social data to enhance existing recommendation systems.

\paragraph{Ablation on the fusion layer.}
We further demonstrate the effectiveness of the fusion layer integrated into P4GCN by directly averaging the user social embeddings (e.g., scaled by $\beta$) and the original user embeddings for comparison. As shown in Figure \ref{fig_privacy_budget}, P4GCN will suffer performance degradation after removing the fusion layer across different datasets, where most of the yellow bars are higher than the blue ones under the same privacy budget $\epsilon$. In addition,  P4GCN w.o. the fusion layer failed to approximate the ideal performance even though the privacy budget is relatively large (e.g., $\epsilon=10.0$ in CiaoDVD), while the version w. Fusion did. This suggests the excellent ability of the fusion layer to aggregate the social information into the user features. Further, P4GCN with the fusion layer also shows a better tolerance to the low privacy budget than the one without using the fusion layer. For example, P4GCN w.o. the fusion layer will harm the original recommendation system on FilmTrust when $\epsilon=10.0$ and Douban when $\epsilon=5.0$, while the usage of the fusion layer decreases the minimal effective privacy budget. These results confirm the effectiveness of the proposed fusion layer in both handling DP-noise and fusing social information.

\subsection{Integrate To Existing Methods}
We show that existing local recommendation methods (e.g.,, PMF and GCN) can benefit from our proposed P4Layer on FilmTrust and CiaoDVD in Table \ref{tb_integrate}, which suggests that companies can improve their local recommendation system by leveraging our proposed P4GCN in a plug-in manner. The parameters of differential privacy are consistent with the settings in Table \ref{table_main}.

\subsection{Impact of hyper-parameter $\beta$}
We study the impact of the choice of hyper-parameter $\beta$ in Figure \ref{fig_beta}.  We denote P4GCN without noise as the ideal case (e.g., the red notations). The figure shows that the optimal value of $\beta$ is always larger than $0$ across all the datasets, indicating that the recommendation system can consistently benefit from social information integrated by our P4GCN regardless of differential privacy. In addition, the DP noise lowers the optimal degree of leveraging social information (e.g., the blue star never appears on the left of the red star) since the aggregation efficiency can be degraded by the noise. We also notice that a large value of $\beta$ will lead to a degradation in the performance of the model, which suggests that the choice of $\beta$ should be very careful in practice. We consider how to efficiently and adaptively decide effective $\beta$ as our future works.

\subsection{Communication cost}
\begin{figure}[t]
    \centering
    \begin{subfigure}[t]{0.22\textwidth}
        \includegraphics[width=\linewidth]{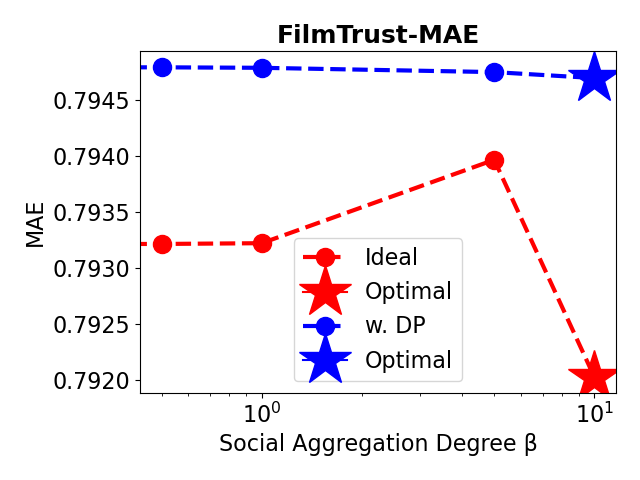}
        \label{fig5:image1}
    \end{subfigure}
        \begin{subfigure}[t]{0.22\textwidth}
        \includegraphics[width=\linewidth]{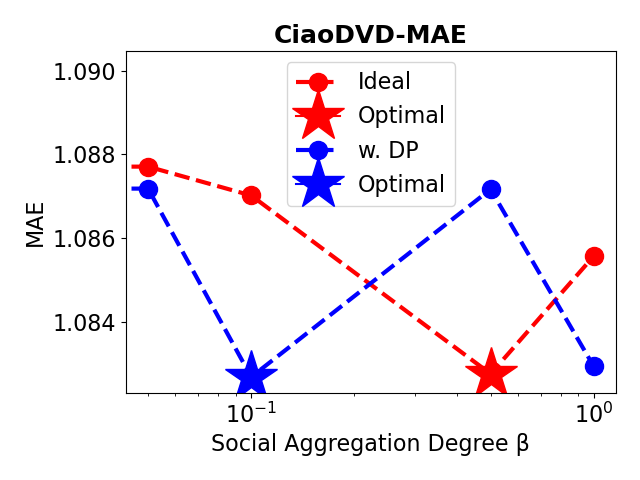}
        \label{fig5:image2}
    \end{subfigure}
        \begin{subfigure}[t]{0.22\textwidth}
        \includegraphics[width=\linewidth]{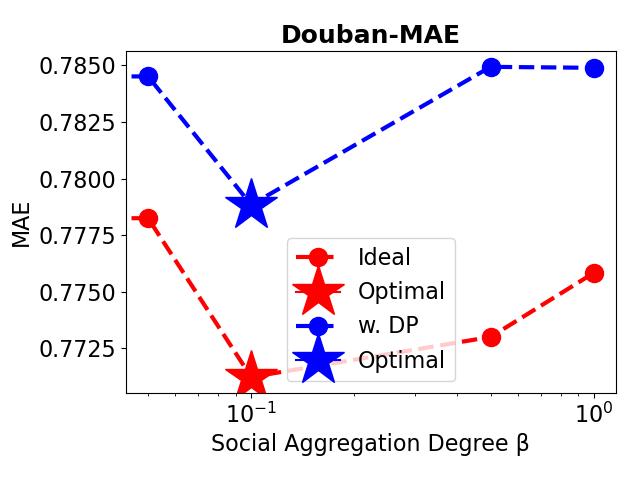}
        \label{fig5:image3}
    \end{subfigure}
        \begin{subfigure}[t]{0.22\textwidth}
        \includegraphics[width=\linewidth]{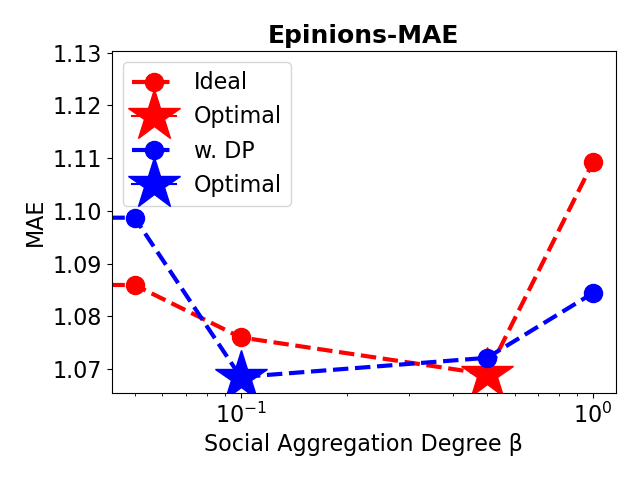}
        \label{fig5:image4}
    \end{subfigure}
        \caption{The impact of social aggregation degree $\beta$ v.s. MAE }
    \label{fig_beta}
\end{figure}

\begin{table}
\centering
\caption{ Communication costs (GB) under the fixed epoch $E=5$ with varying batch sizes (e.g., $64$, $1024$, and $4096$) and the practical cost in Table \ref{table_main} (e.g., the last column)}
\label{table_comm}
\begin{tblr}{
  cells = {c},
  cell{2}{1} = {r=2}{},
  cell{4}{1} = {r=2}{},
  cell{6}{1} = {r=2}{},
  cell{8}{1} = {r=2}{},
  vline{2-3} = {1-2,4,6,8}{},
  vline{2-3} = {3,5,7,9}{},
  hline{1-2,4,6,8,10} = {-}{},
}
\textbf{Name}   & \textbf{Method} & \textbf{B=64} & \textbf{B=1024} & \textbf{B=4096} & \textbf{Prac.} \\
\textbf{FT.} & P4GCN           & 10.70         & 10.68           & 3.81            & 61.77             \\
                   & S3Rec           & 5.48          & 5.47            & 1.78            & 118.33            \\
\textbf{CD.}   & P4GCN           & 21.88         & 21.88           & 21.74           & 21.88             \\
                   & S3Rec           & 15.01         & 15.01           & 14.88           & 21.00             \\
\textbf{DB.}    & P4GCN           & 42.28         & 42.22           & 31.44           & 82.18             \\
                   & S3Rec           & 19.23         & 19.20           & 14.01           & 33.70             \\
\textbf{EP.}  & P4GCN           & 394.44        & 394.44          & 394.24          & 716.38            \\
                   & S3Rec           & 1160.98       & 1160.96         & 1160.09         & 2785.83           
\end{tblr}
\end{table}
We list the communication costs of P4GCN and S3Rec \cite{cui2021exploiting} in Table \ref{table_comm}. We report the communication costs under fixed parameter settings (e.g., 3th-5th columns) and the practical costs of Table \ref{table_main} (e.g., the last column). P4GCN causes nearly $2.2\times$ costs than S3Rec when the epoch number and batch size are fixed on three datasets (i.e., FilmTrust, CiaoDVD, and Douban) and saves $\frac{2}{3}$ efficiency on Epinions. Although S3Rec exhibits lower communication amounts than P4GCN under fixed settings, P4GCN can achieve competitive efficiency when each method runs until reaching its optimal results. We also plan to further improve the communication efficiency of P4GCN in our future works.

\subsection{Comparison with FeSog w. social data} \label{sec5.7}
\begin{figure}
    \centering
    \includegraphics[width=\linewidth]{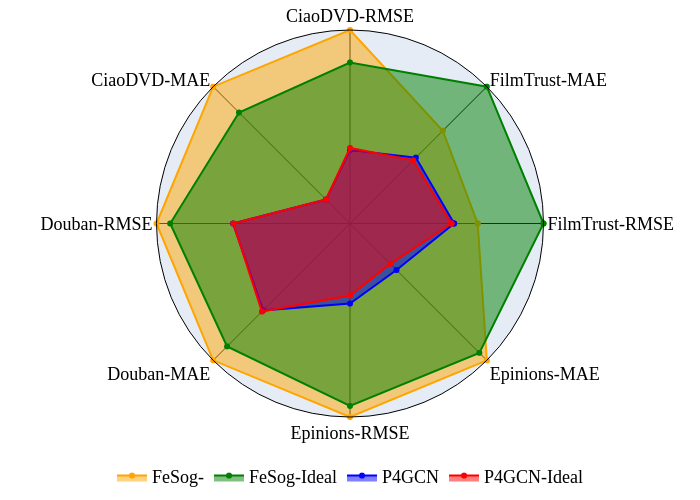}
    \caption{Comparison with FeSog. Smaller areas are better.}
    \label{fig_fesog}
\end{figure}
We finally compare our method with FeSog-Ideal which can directly access the full social data to verify the advantage of P4GCN in enhancing recommendation systems with social data. As shown in Figure \ref{fig_fesog}, integrating social data can slightly improve model performance in FeSog when the social data is fully available in most cases (e.g., CiaoDVD, Douban, and Epinions). However, FeSog-Ideal failed to leverage social data to enhance performance in FilmTrust. We attribute this to the weak connection between social information and recommendations in FilmTrust, where S3Rec/SeSoRec also suffers similar failure and the improvement of P4GCN is also limited. Further, our P4GCN dominates FeSog in terms of RMSE and MAE across all the datasets regardless of the availability of social data to FeSog and the usage of differential privacy, which confirms the advantage of P4GCN in federated social recommendation. 

\section{Conclusion}
This paper addresses the development of GNN-based models for a secure social recommendation. We present P4GCN, a novel vertical federated social recommendation approach designed to enhance recommendation accuracy when dealing with inaccessible social data. P4GCN incorporates a sandwich-encryption module, which guarantees comprehensive data privacy during collaborative computing. Experimental results on four datasets demonstrate that P4GCN outperforms state-of-the-art methods in terms of recommendation accuracy. We are considering leveraging other formats of graph information like LLM guidance, and knowledge graph, by P4GCN to enhance recommendation systems in our future works.
\section{Acknowledgement}
The research was supported by Natural Science Foundation of China (62272403).
\bibliographystyle{ACM-Reference-Format}
\balance
\bibliography{sample-base}
\newpage
\appendix
\begin{appendix}
\section{Derivations}

\subsection{The derivation of the upper bounds of $\ell_2$ sensitivity}
We denote the adjacent databases by $\bold A$ and $\bold A'$ where $\bold A'_{km} = 1-A_{km}$. And other elements of the two matrices are the same. The $k$th row in the $\bold{\tilde{L}}_{sym}$ of $\bold A$ is $\bold l_k$ (e.g., $\bold l_k'$ for $\bold A'$). Letting $d_j =\sqrt{\|\bold a_j\|_1}$ and $h_k=\frac{d_k-d_k'}{d_kd_k'}$, then we have
\begin{align}
        &\|l_k'\bold X-l_k\bold X\|_2^2 = \|(l_k'-l_k)\bold X\|_2^2\\
        &= \|\left[\frac{a_{kj}}{d_kd_j},\cdots,\frac{a_{km}}{d_kd_m},\cdots\right]-\left[\frac{a_{kj}}{d_k'd_j},\cdots,\frac{1-a_{km}}{d_k'd_m},\cdots\right]\bold X\|_2^2 \notag \\
    &=\|h_k\left[\frac{a_{kj}}{d_j},\cdots,\frac{1-a_{km}}{d_m}\frac{d_k}{d_k-d_k'}-\frac{a_{km}}{d_m}\frac{d_k'}{d_k-d_k'},\cdots\right]\bold X\|_2^2 \notag\\
    &=h_k^2\|\sum_{j=1}^N\frac{a_{kj}}{\|\bold a_j\|_1}\bold X_j+\left(\frac{(1-a_{km})d_k - a_{km}d_k'}{d_m(d_k-d_k')}-\frac{a_{km}}{d_m}\right)\bold X_m\|_2^2\notag\\
    &\le h_k^2\left(\|\sum_{j=1}^N\frac{a_{kj}}{\|\bold a_j\|_1}\bold X_j\|_2^2+\frac{\left(\frac{(1-a_{km})d_k - a_{km}d_k'}{d_k-d_k'}\right)^2 - a_{km}^2}{\|\bold a_m\|_1}\|\bold X_m\|_2^2\right)\notag\\
    &\le h_k^2(\|\bold a_k\|_1+\frac{\|\bold a_k\|_1+a_{km}(1-2a_{km})}{\|\bold a_m\|_1})C^2\notag \\
    &\le (\frac{d_k-d_k'}{d_kd_k'})^2(\|\bold a_k\|_1+\frac{\|\bold a_k\|_1+a_{km}(1-2a_{km})}{\|\bold a_m\|_1})C^2\notag \\
    &\le(\frac{1}{\|\bold a_k\|_1^2+\|\bold a_k\|_1}c_k + \frac{1}{\|\bold a_k\|_1}c_{o})C^2
\end{align}

where $c_k=\sum_{j=1}^N \frac{a_{kj}}{\|\bold a_j\|_1}\le \|a_k\|_1, c_o=\max_{m}\frac{1}{\|\bold a_m\|_1+1}\le 1$. Then, we can obtain Eq.\ref{eq_bound1} by replacing $\|l_i'\bold X-l_i\bold X\|_F$ with this bound.

\subsection{Proof of Theorem 4.1}
\begin{theorem}
    Given $\bold J=\bold L\bold M\bold N$ where all matrices are not zero matrices, there exists infinite combinations of $\bold N'\ne \bold N, \bold L'\ne\bold L$ such that $\bold J=\bold L'\bold M\bold N'$.
\end{theorem}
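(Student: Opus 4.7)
The plan is to exhibit an explicit one-parameter family of valid alternatives $(\bold L',\bold N')$ by exploiting scalar homogeneity of matrix multiplication in its outer two factors. Concretely, I fix any $\alpha\in\mathbb{R}\setminus\{0,1\}$ and set $\bold L'=\alpha\bold L$ and $\bold N'=\alpha^{-1}\bold N$. Because scalars commute through matrix products, $\bold L'\bold M\bold N'=(\alpha\bold L)\bold M(\alpha^{-1}\bold N)=\bold L\bold M\bold N=\bold J$, so the triple product is preserved without placing any restriction on $\bold M$.

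Next I would verify the two inequalities $\bold L'\neq\bold L$ and $\bold N'\neq\bold N$ using the hypothesis that the outer matrices are nonzero. Indeed, $\alpha\bold L=\bold L$ forces $(\alpha-1)\bold L=\bold 0$, which together with $\bold L\neq\bold 0$ contradicts $\alpha\neq 1$; the identical argument applied to $\bold N$ handles the other inequality. The nonzero hypotheses on $\bold M$ and $\bold J$ are not actually used in this construction, which is consistent with (but strictly weaker than) what the theorem assumes.

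To conclude, I observe that the assignment $\alpha\mapsto(\alpha\bold L,\alpha^{-1}\bold N)$ is injective on $\mathbb{R}\setminus\{0,1\}$: if two values $\alpha_1,\alpha_2$ produced the same first component, then $(\alpha_1-\alpha_2)\bold L=\bold 0$ would force $\alpha_1=\alpha_2$ by $\bold L\neq\bold 0$. Since $\mathbb{R}\setminus\{0,1\}$ is infinite, the construction yields infinitely many distinct admissible pairs $(\bold L',\bold N')$, which completes the argument.

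There is essentially no technical obstacle here; the main conceptual point is simply recognizing that a one-parameter rescaling already suffices, so no rank or nullspace assumption on $\bold M$, $\bold L$, or $\bold N$ is required. If one wanted an even richer family, one could additionally add any $\Delta_L$ in the left annihilator of $\bold M\bold N$ to $\bold L$ and any $\Delta_N$ in the right annihilator of $\bold L\bold M$ to $\bold N$, since those perturbations vanish in the product; however these annihilators may be trivial under the stated hypotheses, so the homogeneity argument above is the cleanest universally valid route.
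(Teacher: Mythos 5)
Your proof is correct for the theorem as literally stated, but it takes a genuinely different and much more elementary route than the paper. You exploit scalar homogeneity: $(\alpha\bold L)\bold M(\alpha^{-1}\bold N)=\bold J$ for every $\alpha\in\mathbb{R}\setminus\{0,1\}$, and the nonzero hypotheses on $\bold L$ and $\bold N$ make the map $\alpha\mapsto(\alpha\bold L,\alpha^{-1}\bold N)$ injective, so you get infinitely many admissible pairs in two lines. The paper instead perturbs a single entry of $\bold L$ to get $\bold L'=\bold L+\Delta\bold L$, argues via a rank condition that the linear system $(\bold L'\bold M)\bold X=\bold J$ remains consistent, and takes $\bold N'$ to be any solution; interestingly, the paper's own closing remark invokes exactly your rescaling $\bold L'\leftarrow r\bold L'$, $\bold N'\leftarrow\frac{1}{r}\bold N'$ as an afterthought. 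The trade-off is worth noting: your family consists only of scalar multiples of $(\bold L,\bold N)$, so while it proves the stated non-uniqueness, it leaves open the possibility that an adversary recovers $\bold L$ \emph{up to a global scale}, which in the privacy context (where $\bold L$ encodes the normalized social adjacency structure) would still reveal essentially everything of interest. The paper's perturbation argument, whatever its own rough edges, is aimed at producing alternatives that are not proportional to the true $\bold L$, which is the stronger form of ambiguity the ``database-level protection'' claim actually needs. If you wanted to strengthen your argument in that direction, your final remark about adding elements of the left annihilator of $\bold M\bold N$ to $\bold L$ is the right instinct, but as you note it can be vacuous; the paper's approach of perturbing $\bold L$ and re-solving for $\bold N'$ is the way to get non-proportional witnesses in general.
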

\begin{proof}
    Given $\bold J=\bold L\bold M\bold N,\bold L\in \mathbb{R}^{p\times q}, \bold M\in \mathbb{R}^{q\times r}, \bold N\in \mathbb{R}^{r\times s}$, we have
    \begin{equation}
        rank(\bold L\bold M)=rank([\bold L\bold M; \bold J])
    \end{equation}
    Now we consider the equation
    \begin{equation}\label{eq12}
        (\bold L'\bold M) \bold X = \bold J,  \bold X\in \mathbb{R}^{r\times s}, \bold L\ne \bold L'
    \end{equation}
    
    As long as equation (\ref{eq12}) is solvable, then we can directly set $\bold N'$ to be the solver $\bold X$, leading to the establishment of $\bold J=\bold L'\bold M\bold N'$.
    Therefore, to make the equation (\ref{eq12}) solvable, we must establish the following equation
    $$rank(\bold L'\bold M)=rank([\bold L'\bold M; \bold J])$$
    Without loss of generality, we denote $\bold L'=\bold L+\Delta\bold L$. We now introduce a way to choose $\bold L'$ without changing $rank([\bold L'\bold M])$.
    \begin{equation}
        \bold L'\bold M=\bold L\bold M + \Delta \bold L \bold M
    \end{equation}
    By setting $\Delta \bold L$ as 
    \begin{equation}
        \Delta \bold L =  \left[
 \begin{array}{ccc}
     \delta_{11} &  \cdots  & 0 \\
     \vdots & \ddots & \vdots \\
      0& \cdots & 0 
 \end{array}
 \right] 
    \end{equation}
    we can obtain that 
    \begin{equation}
        \bold L'\bold M=\bold L\bold M + \left[
 \begin{array}{c}
     \delta_{11} \bold m_{\cdot 1}\\
     \vdots \\
      0
 \end{array}
 \right] =\bold Z+\Delta\bold Z=\left[
 \begin{array}{c}
     \bold L_{1\cdot }\bold M+\delta_{11} \bold M_{1\cdot }\\
     \vdots \\
      \bold L_{p\cdot }\bold M
 \end{array}
 \right]=\bold Z'
    \end{equation}
\end{proof}
 The influence of $\Delta\bold Z$ on the rank can be easily eliminated by setting a small enough value of $\delta_{11}$. 
In this way, the rank of $\bold Z=\bold L\bold M$ is preserved as 
\begin{equation}
    rank(\bold L\bold M)=rank(\bold L'\bold M)=rank([\bold L'\bold M; \bold J])
\end{equation}
from which we can immediately infer that there exists at least a solver $\bold X$ such that $\bold L'\bold M \bold X=\bold J$. Note that the choice of the position of value changing is not necessary to be specified to $(1,1)$ and the number of changes is also not limited, there will thus be an infinite number of $\Delta\bold L$ that can be the alternative one, leading to the infinite number of combinations of $\bold L', \bold N'$. The distance between $\bold L'$ and $\bold L$ can be arbitrarily decided by choosing  $\bold L'\leftarrow r\bold L', \bold N'\leftarrow \frac1r\bold N', r\in\mathbb R \text{ and } r\ne 0$

\section{The architecture of P4GCN}
\begin{table}\label{table3}
\centering
\caption{Parameters of layers in P4GCN}
\begin{tabular}{l|l} 
\hline
\textbf{LayerName}                   & \textbf{Parameter}                                                                     \\ 
\hline
Local Agg. Weight  & -                                        \\
Social Agg. Weight & $\bold W_{1}\in \mathbb{R}^{d\times d}$  and  $\bold W_{2}\in \mathbb{R}^{d\times d}$                                          \\
Fusion Layer                         & $\bold W_{fusion1}\in \mathbb{R}^{2d\times 2d},\bold W_{fusion2}\in \mathbb{R}^{2d\times 2d}$                             \\
Decoder                              & -\\
\hline
\end{tabular}
\end{table}
The architecture of P4GCN is shown in Table 5. During each iteration, the party $\mathcal{P}_1$ first inputs the batch data (e.g. the batched users' features $\bold X_{user,B}^{(0)}$ and the items' features $\bold X_{item}^{(0)}$) and the user-item graph into the local aggregation GC layer to obtain $\bold X_{user,B}^{(1)}$ and $\bold X_{item}^{(1)}$. Then, $\mathcal{P}_1$ uses sandwich encryption to make the social aggregation on users' features with $\mathcal{P}_2$ to obtain $\bold X_{user,B}^{(2)}$. $\mathcal{P}_1$ further fuses the two types of users' embeddings together by the fusion layer. Concretely, for each user $u_i$ in the current batch, its fusion of embeddings is $\bold x_{u_i}^{(3)} = [\bold x_{user, u_i}^{(1)\top}|\bold x_{user, u_i}^{(2)\top}]^{\top}\odot (\bold W_{fusion, u_i}[\bold x_{user, u_i}^{(1)\top}|\bold x_{user, u_i}^{(2)\top}]^{\top})\in \mathbb{R}^{2d}$. Finally, both the items' embeddings $\bold X_{item}^{(1)}$ and the users' embeddings $\bold X_{user,B}^{(3)} = [\bold x_{u_1}^{(3)},...,\bold x_{u_B}^{(3)} ]$ will be input into the decoder to predict the rating $\hat{r}_{u,v}=4*sigmoid(Relu\left([\bold x_{user,u}^{(3)\top}| \bold x_{item,v}^{(1)}]\bold W_{mlp1}\right)$\\$\bold W_{mlp2})$. We formally present the computation details as below:
\begin{enumerate}
\item \textbf{user-item interaction modeling by arbitrary backbone:} $[X_{user}^{(1)}, X_{item}^{(1)}]=f_{backbone}([X_{user}^{(0)}, X_{item}^{(0)}], \mathcal{R} )$
\item \textbf{user-user interaction modeling by P4GCN:}\\ $X_{user}^{(2)}=\text{ReLU}(\text{Sandwich}(\tilde{L}X_{user}^{(0)}W_1) W_2+b)=\text{ReLU}(YW_2+b)=\text{ReLU}(Z)$

\item \textbf{user feature fusion:} $X_{user}^{(3)}=f_{fusion}([X_{user}^{(1)}, \beta X_{user}^{(2)}])$
\item \textbf{Decode:} $\hat R=\sigma(X_{user}^{(3)}X_{item}^{(1)\top}), \sigma(x)=4\text{sigmoid}(x)+1$

\item \textbf{Compute Loss:} $\mathcal{L}=\frac{1}{ |\mathcal{R}|}\sum_{(i,j)\in \mathcal{R}}(\hat R_{ij}-\mathcal{R}_{ij})^2$

\item \textbf{Backward for $W_2$:} $\frac{\partial\mathcal{L}}{\partial W_2}=\frac{\partial\mathcal{L}}{\partial Z}Y$

\item \textbf{Backward for $X_{user}^{(0)}$:} $\frac{\partial\mathcal{L}}{\partial X_{user}^{(0)}}=\frac{\partial\mathcal{L}}{\partial X_{user}^{(1)}}\frac{\partial X_{user}^{(1)}}{\partial X_{user}^{(0)}} + \frac{\partial\mathcal{L}}{\partial Y}\frac{\partial Y}{\partial X_{user}^{(0)}}=\frac{\partial\mathcal{L}}{\partial X_{user}^{(1)}}\frac{\partial X_{user}^{(1)}}{\partial X_{user}^{(0)}} + \text{Sandwich}(\tilde{L}\frac{\partial\mathcal{L}}{\partial Y}W_1^\top) $
\end{enumerate}
We freeze the parameter $W_1$ as depicted in Sec.4.3 for privacy reasons. All other model parameters (e.g., $f_{fusion}, f_{backward}$) can be optimized locally. The decoder in Figure \ref{fig3} corresponds to the step (4) above and is samely applied to all the methods.

\section{Homomorphic encryption}
\subsection{Paillier algorithm}
Paillier is a public-key cryptosystem that supports additive homomorphism \cite{paillier1999public}. The main steps of the Paillier algorithm are key generation, encryption, and decryption. 
\paragraph{Key generation} First randomly selects two large prime numbers $p$ and $q$ that satisfy the formula $\operatorname{\textit{gcd}}(p q,(p-1)(q-1))=1$, and $p$, $q$ are equal in length. Then we calculate $n=pq$ and $\lambda=\operatorname{\textit{lcm}}(p-1, q-1)$. Second, randomly selection of integer $g\in Z_{n^2}^*$ and define function $L$ as $L(x)=\frac{x-1}{n}$ and calculate $\mu=\left(L\left(g^\lambda \bmod n^2\right)\right)^{-1} \bmod n$. Finally, we get private key $(n, g)$ and public key $(\lambda, \mu)$.
\paragraph{Encryption}
First input the plaintext $m$ satisfies $0 \leq m \leq n$. Then choose a random number $r$ that satisfies $r \in Z_n^*$. Finally, we calculate the ciphertext as $c=g^m r^n \bmod n^2$.
\paragraph{Decryption}
Input ciphertext $c$ that satisfies $c\in Z_{n^2}^*$, and then calculate the plaintext message as $
m=L\left(c^\lambda \bmod n^2\right) \cdot \mu \bmod n
$
\section{Details of Datasets}
We list the statistics of the datasets in Table \ref{table_dataset}
\begin{table}[t]
\centering
\caption{ Dataset statistics}
\label{table_dataset}
\small
\begin{tblr}{
  vline{2} = {-}{},
  hline{1-2,8} = {-}{},
}
\textbf{Dataset}                              & \textbf{CiaoDVD} & \textbf{FilmTrust} & \textbf{Douban} & \textbf{Epinions} \\
\textbf{Users}                           & 7375             & 1508               & 3000            & 22158             \\
\textbf{Items}        & 99746            & 2071               & 3000            & 296277            \\
\textbf{Ratings}      & 278483           & 35497              & 136891          & 728517            \\
\textbf{Social Links} & 111781           & 1853               & 7765            & 355364            \\
\textbf{Density}$_{Rating}$                   & 0.0379\%         & 1.1366\%           & 1.5210\%        & 0.0110\%          \\
\textbf{Density}$_{Link}$                     & 0.2055\%         & 0.0815\%           & 0.0863\%        & 0.0723\%          
\end{tblr}
\end{table}
\section{Notation}
We list the notations in Table \ref{table_notation}.
\begin{table}[t]
\centering
\caption{Notations and the corresponding meanings.}
\label{table_notation}
\begin{tabular}{l|l}
\hline
\textbf{Notation}                  & \textbf{Description}                                                                \\ \hline
$u_i$                     & The $i$th user                                                             \\ \hline
$v_j$                     & The $j$th item                                                             \\ \hline
$r_{ij}$                  & The rating assigned to the item $v_j$ by the user $u_i$                    \\ \hline
$s_{ij}$                  & The social link between users $u_i$ and $u_j$                              \\ \hline
$U$                       & The user set                                                               \\ \hline
$V$                       & The item set                                                               \\ \hline
$N$                       & The number of users                                                        \\ \hline
$\mathcal{R}$             & The user-item interactions                                                 \\ \hline
$\mathcal{S}$             & The user-user interactions                                                 \\ \hline
$\mathcal{P}_1$           & The party owns user-item interactions                                      \\ \hline
$\mathcal{P}_2$           & The party owns user-user interactions                                      \\ \hline
$\mathbf{S}$              & Social matrix, e.g., $S_{ij}=1$ if $(u_i, u_j, 1)\in \mathcal{S}$ else $0$ \\ \hline
$\mathbf{D}$              & The degree matrix of $\mathbf{S}$   (e.g., Eq.(1))                         \\ \hline
$\tilde{L}_{sym}$         & The symmetric Laplacian matrix of $\mathbf{S}$   (e.g., Eq.(1))            \\ \hline
$\mathcal{L}$             & Loss function                                                              \\ \hline
$\mathbf{W}$              & Model parameters                                                           \\ \hline
$d$                       & Feature dimension                                                          \\ \hline
$\mathbf{X}_{user}^{(l)}$ & The $l$th block's user features of the model                               \\ \hline
$\mathbf{X}_{item}^{(l)}$ & The $l$th block's item features of the model                               \\ \hline
$\epsilon, \delta$        & Differential privacy parameters                                            \\ \hline
$g_{dp}$                  & Differential privacy mechanism                                             \\ \hline
$B$                       & Batch size                                                                 \\ \hline
$\mathcal{B}$             & The number of users in a batch                                             \\ \hline
$T$                       & The number of training iterations                                          \\ \hline
$\eta$                    & Learning rate                                                              \\ \hline
$\mathscr{P}_{prv/pub,2}$ & Private/public key of party $\mathcal{P}_2$                                \\ \hline
$\mathbf{JLMNUV}$    & General matrices
\\ \hline
\end{tabular}
\end{table}

\section{Additional Experiments}

\begin{table}[t]
\centering
\caption{Comparison with the centralized methods.}
\begin{tblr}{
  vline{2-3} = {-}{},
  hline{1-2,6} = {-}{},
}
\textbf{Dataset}   & \textbf{FilmTrust}     & \textbf{CiaoDVD}       \\
\textbf{GBSR}      & 0.7940/0.6124          & 1.0943/\textbf{0.8195} \\
\textbf{GDMSR}     & \textbf{0.7897}/0.6037 & 1.0811/0.8241          \\
\textbf{DiffNet++} & 0.8312/0.6507          & 1.1387/0.8620          \\
\textbf{P4GCN}     & 0.7905/\textbf{0.6032} & \textbf{1.0803}/0.8225 
\end{tblr}
\end{table}

We compare P4GCN with three recent centralized social recommendation methods (e.g., GBSR \cite{yang2024graph}, GDMSR \cite{quan2023robust}, and DiffNet++ \cite{wu2020diffnet++}. The results on FilmTrust and CiaoDVD are as below. We notice that our P4GCN can achieve competitive results against the latest baselines in terms of RMSE/MAE metrics.

\section{Limitation and Broader Impact}
This work introduces a way to leverage user's social data to improve the recommendation system on the company view. One limitation lies in that we only discuss the method on GCN operator. And we plan to extend this work to other operators like graph attention as our future work.
\end{appendix}
\end{document}